\def\withcolors{0}
\def\withnotes{0}
  \newcommand{\anote}[1]{\par\acolor{\textbf{A:} #1}} %
    \newcommand{\scolor}[1]{{\color{orange}#1}} %
  \newcommand{\snote}[1]{\par\scolor{\textbf{S:} #1}} %
	\newcommand{\marginnote}[1]{\todo[color=white,linecolor=black]{#1}}
  \newcommand{\anote}[1]{{#1}}
  \newcommand{\scolor}[1]{{#1}}
  \newcommand{\snote}[1]{{#1}}
  \newcommand{\marginnote}[1]{\ignore{#1}}
\newcommand{\amargin}[1]{\marginnote{\anote{#1}}}
\newcommand{\smargin}[1]{\marginnote{\snote{#1}}}
  \theoremstyle{plain}
  \newtheorem{theorem}{Theorem}
  \newtheorem{lemma}[theorem]{Lemma}
  \newtheorem{claim}{Claim}
  \theoremstyle{remark}
\newcommand{\ignore}[1]{}
\newcommand{\II}{\mathbb{I}} %
\newcommand{\EE}{\mathbb{E}}
\newcommand{\RR}{\mathbb{R}}
\newcommand{\expectation}[1]{\EE\left[#1\right]}
\newcommand{\variance}[1]{\textrm{Var}\left(#1\right)}
\def \cC     {{\cal C}}
\def \cP     {{\cal P}}
\def \cW     {{\cal W}}
\def \cX     {{\cal X}}
\def \cY     {{\cal Y}}
\def \cZ     {{\cal Z}}
\def \Paren#1{{\left({#1}\right)}}
\newcommand{\eqdef}{{:=}}
\newcommand{\probof}[1]{\Pr\Paren{#1}}
\def\ignore#1{}
\newcommand{\bi}{\begin{itemize}}
\newcommand{\ei}{\end{itemize}}
\def\orpro{\mathop{\mathchoice
   {\vee\kern-.49em\raise.7ex\hbox{$\cdot$}\kern.4em}
   {\vee\kern-.45em\raise.63ex\hbox{$\cdot$}\kern.2em}
   {\vee\kern-.4em\raise.3ex\hbox{$\cdot$}\kern.1em}
   {\vee\kern-.35em\raise2.2ex\hbox{$\cdot$}\kern.1em}}\limits}
\def\andpro{\mathop{\mathchoice
 {\wedge\kern-.46em\lower.69ex\hbox{$\cdot$}\kern.3em}
 {\wedge\kern-.46em\lower.58ex\hbox{$\cdot$}\kern.25em}
 {\wedge\kern-.38em\lower.5ex\hbox{$\cdot$}\kern.1em}
 {\wedge\kern-.3em\lower.5ex\hbox{$\cdot$}\kern.1em}}\limits}
\def\simge{\mathrel{%
   \rlap{\raise 0.511ex \hbox{$>$}}{\lower 0.511ex \hbox{$\sim$}}}}
\def\simle{\mathrel{
   \rlap{\raise 0.511ex \hbox{$<$}}{\lower 0.511ex \hbox{$\sim$}}}}
\newcommand{\err}{\xi}
\newcommand{\Dk}{\Delta_k}
\newcommand{\ab}{k}
\newcommand{\oab}{K}
\newcommand{\ns}{n}
\newcommand{\nse}{{\frac{n}{2}}}
\newcommand{\nsein}{n/2}
\newcommand{\ms}{M}
\newcommand{\p}{p}
\newcommand{\bias}{b}
\newcommand{\hp}{\hat{\p}}
\newcommand{\tp}{\tilde{\p}}
\newcommand{\eps}{\varepsilon}
\newcommand{\ham}[2]{d_{\rm Ham}(#1,#2)}
\newcommand{\dtv}[2]{d_{\rm TV}(#1,#2)}
\newcommand{\tvd}{d_{\rm TV}}
\newcommand{\norm}[2]{\left\lVert#1\right\rVert_{#2}}
\newcommand{\setk}{\triangle_k}
\newcommand{\setsp}{\triangle_{k,s}}
\newcommand{\SC}{\text{SC}}
\newcommand{\expectover}[2]{\EE_{#1}\left[{#2}\right]}
\newcommand{\condexpect}[2]{\expectation{{#1}\;\middle|\;{#2}}}
\newcommand{\condprob}[2]{\Pr\left({#1}\;\middle|\;{#2}\right)}
\newcommand{\pset}{t}
\newcommand{\psetemp}{\hat{\pset}}
\newcommand{\vecpset}{\mathbf{\pset}}
\newcommand{\vecpsetemp}{\hat{\vecpset}}
\newcommand{\citep}[1]{\cite{#1}}
\newcommand{\newzs}[1]{{\textcolor{black}{#1}}}
\title{Estimating Sparse Discrete Distributions Under Local Privacy and Communication Constraints}
\author{Jayadev Acharya\\
	Cornell University\\
	\tt{acharya@cornell.edu}
	\and
	Peter Kairouz\\
	Google\\
	\tt{kairouz@google.com}
	\and
	Yuhan Liu\\
	Cornell University\\
	\tt{yl2976@cornell.edu}
	\and
	Ziteng Sun\\
	Cornell University\\
	\tt{zs335@cornell.edu}
}
\date{}
\begin{document}

\maketitle
\begin{abstract}
	We consider the problem of estimating sparse discrete distributions 
	under 
	local differential privacy (LDP) and communication constraints. We 
	characterize the sample complexity for sparse estimation under LDP 
	constraints up to a constant factor, and the sample complexity under
	communication constraints up to a logarithmic factor. Our upper bounds 
	under LDP are based on the Hadamard Response, a private coin scheme 
	that 
	requires only one bit of communication per user. Under communication 
	constraints we propose public coin schemes based on random hashing 
	functions. Our tight lower bounds are based on recently proposed 
	method 
	of chi squared contractions.
\end{abstract}
\section{Introduction}
\label{sec:introduction}
Estimating distributions from data samples is a central task in statistical 
inference. In modern learning systems such as federated 
learning~\citep{kairouz2019advances},
data is generated from distributed 
sources including cell phones, wireless sensors, and smart healthcare 
devices. Access to such data is subject to severe ``local information 
constraints'', such as communication and energy constraints, privacy 
concerns. For several statistical inference tasks, including distribution 
estimation, privacy and communication constraints lead to significant 
degradation in utility (see Section~\ref{sec:related} for a detailed 
discussion). Moreover, in some applications, such as web-browsing, 
genomics, and 
language modeling, the distribution is often supported over a small 
unknown subset of the domain. Motivated by the utility gain in high 
dimensional statistics under sparsity 
assumptions~\citep{wainwright2019high}, we study the problem of 
estimating sparse discrete distributions under privacy and communication 
constraints. 

\subsection{Notations and problem set-up}\label{sec:prelim}
Let $[\ab] := \{1, 2, ..., \ab \}$ and $\triangle_\ab := \{ p \in [0,1]^\ab  \colon  \sum_{x\in [\ab]} p(x)= 1 \}$ be the set of all distributions over $[\ab]$. {For $\p \in \Dk$ and $S \subset [k]$, let $\p^{S}$ be the vector restricted on indices in $S$.} Independent samples $X_1, \ldots, X_\ns$ from an unknown $p\in\triangle_\ab$ are observed by $\ns$ users, where user $i$ observes $X_i$. User $i$ sends a message $Y_i = W_i(X_i)$ to a central server, where $W_i: [\ab] \rightarrow \cY$ is a randomized mapping (channel) with 

\[
    W_i(y \mid x) = \condprob{Y_i = y}{X_i = x}.
\]

\noindent We consider privacy and communication constrained messages in this paper, which can be enforced by restricting $W_i$s to belong to a class $\cW$ of \emph{allowed} channels. 

\medskip
\noindent\textbf{Local Differential Privacy (LDP).} A channel $W:[\ab]\to\cY=\{0,1\}^\ast$ is $\eps$-LDP if
\begin{align}
    \sup_{y \in \cY} \sup_{x, x' \in \cX} \frac{W(y \mid x)}{W(y \mid x')} \le e^{\eps}.\label{eqn:lsp}
\end{align}
$\cW_{\eps}=\{W: W \text{ is } \eps\text{-LDP}\}$ is the set of all 
$\eps$-LDP channels.

\medskip
\noindent\textbf{Communication constraints.} Let $\ell<\log \ab$, and  $\cW_\ell\:=\{W:[\ab]\to\cY=\{0,1\}^\ell\}$ be the set of channels that output $\ell$-bit messages, and thus characterize communication constraints.

\begin{figure}[h]\centering
	\scalebox{.75}{\begin{tikzpicture}[->,>=stealth',shorten >=1pt,auto,node distance=20mm, semithick]
  \node[circle,draw,minimum size=13mm] (A)
  {$X_1$}; \node[circle,draw,minimum size=13mm] (B) [right of=A]
  {$X_2$}; \node (C) [right of=B] {$\dots$}; \node[circle,draw,minimum
  size=13mm] (D) [right of=C] {$X_{\ns-1}$}; \node[circle,draw,minimum
  size=13mm] (E) [right of=D] {$X_\ns$};
  
  \node[rectangle,draw,minimum width=13mm,minimum
  height=7mm,fill=gray!20!white] (WA) [below of=A]
  {$W_1$}; \node[rectangle,draw,minimum width=13mm,minimum
  height=7mm,fill=gray!20!white] (WB) [below of=B] {$W_2$}; \node (WC)
  [below of=C] {$\dots$}; \node[rectangle,draw,minimum
  width=13mm,minimum height=7mm,fill=gray!20!white] (WD) [below of=D]
  {$W_{\ns-1}$}; \node[rectangle,draw,minimum width=13mm,minimum
  height=7mm,fill=gray!20!white] (WE) [below of=E] {$W_\ns$};
  
  \node[draw,dashed,fit=(WA) (WB) (WC) (WD) (WE)] {};
  o \node[circle,draw,minimum size=13mm] (YA) [below of=WA]
  {$Y_1$}; \node[circle,draw,minimum size=13mm] (YB) [below of=WB]
  {$Y_2$}; \node (YC) [below of=WC]
  {$\dots$}; \node[circle,draw,minimum size=13mm] (YD) [below of=WD]
  {$Y_{\ns-1}$}; \node[circle,draw,minimum size=13mm] (YE) [below
  of=WE] {$Y_\ns$};

  \node (P) [above of=C] {$\p$}; \node[rectangle,draw, minimum
  size=10mm] (R) [below of=YC] {Server}; \node (out) [below
  of=R,node distance=13mm] {output};

  \draw[->] (P) edge[dotted] (A)(A) edge (WA)(WA) edge (YA)(YA) edge
  (R); \draw[->] (P) edge[dotted] (B)(B) edge (WB)(WB) edge (YB)(YB)
  edge (R); \draw[->] (P) edge[dotted] (D)(D) edge (WD)(WD) edge
  (YD)(YD) edge (R); \draw[->] (P) edge[dotted] (E)(E) edge (WE)(WE)
  edge (YE)(YE) edge (R); \draw[->] (R) edge (out);
\end{tikzpicture}}
	\caption{Distributed inference with simultaneous message passing (SMP) protocol.}
	\label{fig:model}
\end{figure}
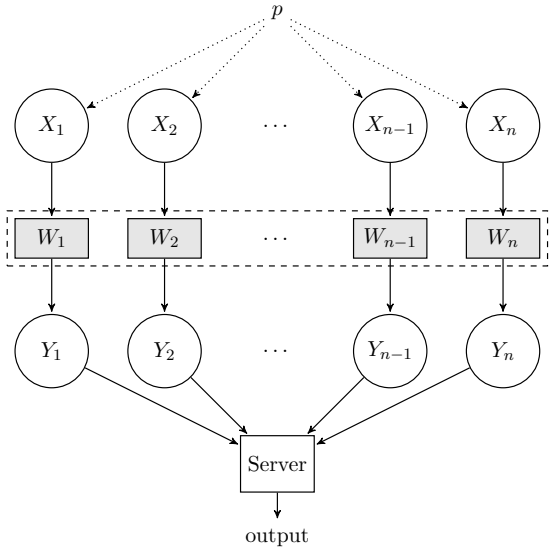
\medskip
\noindent\textbf{Distribution estimation.} Let 
$d:\triangle_\ab\times\triangle_\ab\to\RR_+$ be a distance measure. A 
distribution estimation protocol under constraints $\cW$ is a set of 
channels $W^\ns := (W_1, W_2, \ldots, W_\ns) \in \cW^\ns$ and an 
estimator $\hat{p}: \cY^\ns \rightarrow \triangle_\ab$. Upon observing the 
$n$ output messges $Y^\ns:=(Y_1, Y_2, \ldots, Y_\ns)$, the central server 
outputs an estimate $\hat{p}(Y^\ns)$ of the underlying unknown 
distribution $p$.  Let $\alpha>0$ be an accuracy parameter. 
The minimax sample complexity for estimation is
\[
	\SC(\alpha, \triangle_\ab, \cW, d) := \arg\min_{\ns} \left \{ \min_{\hat{p}} \min_{W^\ns \in \cW^\ns} \max_{p \in \triangle_\ab}   \probof{d(p, \hat{p}(Y^\ns)) \le \alpha} \ge 0.9 \right  \},
\]
the fewest number of samples for which we can estimate every $p \in 
\setk$ up to $\alpha$ accuracy with probability at least 0.9. 
We will use the total variation distance, $\dtv{\hat{p}}{p} := 
\sup_{S\subseteq [k]}|\hat{p}(S) - {p}(S)| = \frac{1}{2}\norm{p - \hat{p}}{1}$ 
as the distance measure. 

\medskip
\noindent\textbf{Sparse distribution estimation.} Let $s\le \ab/100$\footnote{Sparsity larger than $k/100$ gives same answers as the non-sparse case.} and 
\[
\setsp := \{ p \in \triangle_k \colon \norm{p}{0} \le s\}
\]
be the distributions in $\triangle_\ab$ with support size at most $s$. 
Let 
\[
	\SC(\alpha, \setsp, \cW, \tvd) \eqdef \arg\min_{n} \left \{ \min_{\hat{p}} \min_{W^n \in \cW^n} \max_{p \in \setsp}   \probof{\dtv{p}{\hat{p}(Y^n)}) \le \alpha} \ge  0.9 \right \}
\]
denote the sample complexity of estimating $s$-sparse distributions to total variation distance $\alpha$.

In this paper, we consider simultaneous message passing (SMP) protocols 
(non-interactive schemes) where all the messages from users are sent 
simultaneously (see Figure~\ref{fig:model}). SMP protocols are broadly 
classified as \emph{private-coin} and \emph{public-coin} protocols. In 
{private-coin} schemes the channels $W_i$ are independent. In the more 
general {public-coin} schemes,  the channels are chosen based on a 
function of a public randomness $U$ observed by all the users  and the 
server. Private-coin protocols are a strict subset of public-coin protocols. 
We refer the readers to~\cite{acharya2020interactive} for detailed 
definitions of these protocols.  %

\subsection{Previous results and our contribution}
\label{sec:results}

Discrete distribution estimation under communication~\citep{HOW:18, 
HMOW:18, AcharyaCT19, acharya2020interactive} and 
LDP~\citep{DuchiJW13, ErlingssonPK14, KairouzBR16, YeB18, AcharyaSZ18, 
AcharyaS19} constraints is well studied, and it is now known that for 
$\ell$-bit channels $\cW_\ell$, and  $\eps$-LDP 
channels $\cW_\eps$ (for $\eps=O(1)$),
\begin{align}
   \SC(\alpha, \triangle_\ab, \cW_{\ell}, \tvd) = \Theta\Paren{\frac{k^2}{\alpha^2 \min \{k, 2^{\ell} \}}},\quad   \SC(\alpha, \triangle_k, \cW_{\eps},  \tvd) = \Theta\Paren{\frac{k^2}{\alpha^2 \eps^2}}.
   \label{eqn:sample-complexity}
\end{align}
Plugging $\ell = \log \ab$ in the first equation gives the centralized sample complexity of $\Theta\Paren{{\ab}/{\alpha^2}}$.  Note that for $\eps=O(1)$ and $\ell=O(1)$, the sample complexity increases by a factor $k$ from the centralized setting. 

We now present our results on estimating distributions in $\triangle_{\ab,s}$, the set of $s$ sparse distributions in $\Dk$. 
Our first result is a complete characterization of the sample complexity  under $\eps$-LDP  up to constant factors. 
\begin{theorem} [Bounds for $\eps$-LDP constraint]
\label{thm:ldp_sparse}
    For $\eps = O(1)$ and $\alpha \in (0,1)$,
    \[
        \SC(\alpha, \setsp, \cW_{\eps}, \tvd) = \Theta\Paren{\frac{s^2 \max \{ \log (k/s), 1\} }{\alpha^2 \eps^2}}.
    \]
    Moreover, there exists a private-coin protocol with one-bit  privatized messages that achieves the upper bound. The algorithm runs in nearly linear time in $n$ and $\ab$.
\end{theorem}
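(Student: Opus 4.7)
The plan is to establish matching upper and lower bounds of $\Theta(s^2 \max\{\log(k/s),1\}/(\alpha^2 \eps^2))$. The upper bound will be achieved via a one-bit Hadamard Response scheme composed with a sparsity-aware post-processing step, and the lower bound will follow from a chi-squared contraction argument against a carefully chosen packing of $\setsp$. Since Equation~\eqref{eqn:sample-complexity} already supplies the $s = \Theta(k)$ regime (where $\log(k/s) = O(1)$), the work is in the $s \ll k$ regime.

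For the upper bound, I would have each user transmit a single Hadamard Response bit, yielding at the server an unbiased estimator $\hat p \in \RR^k$ of $p$ whose coordinate errors $\hat p_x - p_x$ are sub-Gaussian with scale $O(1/\sqrt{n\eps^2})$. The naive bound $\|\hat p - p\|_1 \le \sqrt{k}\,\|\hat p - p\|_2$ throws away sparsity and only recovers Equation~\eqref{eqn:sample-complexity}; instead I would hard-threshold $\hat p$ at a level $\tau \asymp \sqrt{\log(k/s)/(n\eps^2)}$ (equivalently, keep a top-mass portion) and renormalize to the simplex. Decomposing the total-variation error into contributions from $\{x : p_x > 0\}$ and from the null coordinates, the first block contributes at most $s \cdot O(1/\sqrt{n\eps^2})$ by a union bound over at most $s$ support coordinates, while the second block is controlled by the standard fact that the sum of the top $s$ order statistics of $k-s$ centered sub-Gaussian variables is $O(s\sqrt{\log(k/s)/(n\eps^2)})$. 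Picking $n = \Theta(s^2 \log(k/s)/(\alpha^2 \eps^2))$ makes both pieces $O(\alpha)$. The near-linear runtime comes from the Fast Hadamard Transform used to compute both the user messages and the decoded estimator.

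For the lower bound, I would build a packing $\cP \subset \setsp$ of $\Omega(\alpha)$-separated (in total variation) sparse distributions by first fixing a base distribution uniform on some $s$-subset $S \subset [k]$, then perturbing its coordinates by $\pm \alpha/s$ along a $\{\pm 1\}^s$-hypercube code, and finally varying $S$ over an exponential-in-$s\log(k/s)$ family of $s$-subsets. This gives $\log|\cP| = \Theta(s \log(k/s))$ while keeping every member in $\setsp$. Passing through any $\eps$-LDP channel $W$, the chi-squared divergence between any two members contracts by a factor $O(\eps^2)$ (by the LDP chi-squared contraction results cited in the abstract), and a Fano-style inequality over the packing then forces $n \ge \Omega(s^2 \log(k/s)/(\alpha^2 \eps^2))$. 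In the regime $\log(k/s) = O(1)$ I would instead use an Assouad argument on just the hypercube perturbation to recover the $s^2/(\alpha^2 \eps^2)$ baseline.

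I expect the main obstacle to be the tight upper-bound analysis: controlling the noise contribution from the $k-s$ null coordinates with exactly the $\log(k/s)$ scaling (rather than the looser $\log k$ one would get from a uniform union bound) requires a careful treatment of top-order statistics of sub-Gaussian variables and a correctly chosen threshold, and in parallel one must verify that thresholding does not destroy too much mass on genuinely small but nonzero support coordinates. A secondary subtlety is making the lower-bound packing simultaneously (i) live inside $\setsp$ after perturbation, (ii) be well-separated in total variation, and (iii) admit enough variation in the support $S$ to unlock the $\log(k/s)$ packing entropy; the base distribution mass on $S$ must be chosen to balance these constraints.
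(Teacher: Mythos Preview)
Your upper-bound route (Hadamard Response followed by hard thresholding/top-$s$ selection and renormalization) is a legitimate alternative to the paper's route (Hadamard Response followed by $\ell_2$-projection onto $\setsp$ and a covering-number argument). Both ultimately rest on the same fact: for Hadamard Response, $\langle\tilde p-p,e\rangle$ is sub-Gaussian with variance proxy $O(\|e\|_2^2/(n\eps^2))$ for every fixed vector $e$, by orthogonality of $H_K$. The paper invokes this over a $\rho$-net of $2s$-sparse unit vectors to bound $\|\hat p-p\|_2$ directly; your thresholding analysis needs it to bound the sum of the top-$s$ null-coordinate magnitudes with the $\sqrt{\log(k/s)}$ (rather than $\sqrt{\log k}$) scaling. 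One caution: you phrase this as ``the sum of the top $s$ order statistics of $k-s$ centered sub-Gaussian variables is $O(s\sqrt{\log(k/s)/(n\eps^2)})$'' as if it were a black-box fact about arbitrary sub-Gaussians. The coordinate errors here are \emph{not} independent, so you must use the linear-combination property above (equivalently, bound $\mathbb{E}[|\hat p_x|\mathbf{1}\{|\hat p_x|>\tau\}]$ from the marginal sub-Gaussian tail and sum), not a generic order-statistics lemma.

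Your lower bound, however, has a real gap. With base distributions uniform on an $s$-subset $S$ (mass $1/s$ on each $x\in S$) plus $\pm\alpha/s$ perturbations, the mixture mean $\bar p$ is essentially uniform on $[k]$, and for every packing member $p_{S,\epsilon}$ one has $p_{S,\epsilon}(x)-\bar p(x)=\Theta(1/s)$ on $S$, \emph{not} $\Theta(\alpha/s)$. The chi-squared contraction calculation (Equation~\eqref{eqn:chi-square} in the paper) then yields $I(Z;Y^n)=O(n\eps^2/s)$ with no $\alpha^2$, and Fano only gives $n=\Omega(s^2\log(k/s)/\eps^2)$. Combining this with your separate Assouad bound $\Omega(s^2/(\alpha^2\eps^2))$ produces a maximum, not a product, so you miss the target $s^2\log(k/s)/(\alpha^2\eps^2)$ whenever $\alpha$ and $1/\log(k/s)$ are both small. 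The paper's construction fixes exactly this: it places mass $1-8\alpha$ on a single anchor symbol and spreads the remaining $8\alpha$ uniformly over the varying $s$-subset, so that (i) two different supports are $\Theta(\alpha)$ apart in total variation and (ii) $p_z-\bar p$ is itself $\Theta(\alpha/s)$ per coordinate, making the chi-squared contraction produce $O(n\alpha^2\eps^2/s)$. That $\alpha$-scaling of the varying part, not the additive $\pm\alpha/s$ hypercube perturbation, is the missing ingredient.
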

A few remarks are in order. This result shows that sparsity $s$ is the 
effective domain size up to logarithmic factors. While an additional $\log 
\ab$ factor is slightly simpler to obtain, a more involved technique in 
sparse estimation based on a covering set argument is used to establish 
the upper bound with the optimal overhead factor of $\log (\ab/s)$. The 
lower bound is obtained by applying the recently developed chi-squared 
contraction techniques~\citep{AcharyaCT19} to a new construction of 
distributions.

\ignore{\noindent It can be seen that the sample complexity is almost the sample as the sample complexity of discrete distribution over $[s]$ up to a logarithmic factor in the true dimension $k$, which is a significant improvement when $s \ll k$. The algorithm that achieves Algorithm~\ref{thm:ldp_sparse} is aware of the sparsity parameter $s$. However, in certain applications, the analyst doesn't know how sparse the distribution is. In this case, we can obtain a similar improvement, stated in Theorem~\ref{thm:ldp_sparse_b}.
\begin{theorem} \label{thm:ldp_sparse_b}
	There exists an $\eps-$LDP algorithm, which doesn't know $s$, that achieves a sample complexity of
	\[
	O\left(\frac{s^2\log \ab}{\alpha^2\eps^2}\right).
	\]
	Moreover, the algorithm is a private-coin protocol with 1-bit communication from each user achieving this bound. The algorithm runs in quasilinear time.
\end{theorem}
\noindent Algorithms achieving Theorem~\ref{thm:ldp_sparse} and Theorem~\ref{thm:ldp_sparse_b} are presented in Section~\ref{sec:ldp}. The optimality of the sample complexity bound in Theorem~\ref{thm:ldp_sparse} is presented in Section~\ref{sec:ldp_lower}.}
\medskip

\noindent We now present our sample complexity bounds under communication constraints. Unlike LDP, our bounds are off by logarithmic factors in various parameter regimes. Resolving this gap and obtaining the tight bounds is an open question.
\begin{theorem}[Bounds for $\ell$-bit constraint]
\label{thm:lbit_sparse}
For $\alpha\in(0, 1)$, and channel family $\cW_\ell$, the sample complexity for learning distributions in $\setsp$ satisfies,
\[
    \SC(\alpha, \setsp, \cW_\ell,  \tvd)=O\left(\frac{s^2 \max \{ \log({\ab}/{s}),1\}}{\alpha^2\min\left\{2^\ell, s\right\}}\right).
\]
\[
   \SC(\alpha, \setsp, \cW_\ell,  \tvd)=\Omega\left(\max\left\{\frac{s^2}{\alpha^2 \min\{2^\ell, s\}}, \frac{s^2\max\{\log({\ab}/{s}),1\}}{\alpha 2^\ell}\right\}\right).
\]
\end{theorem}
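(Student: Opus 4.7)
My plan is a public-coin SMP protocol based on random hashing, parallel to the LDP construction in Theorem~\ref{thm:ldp_sparse}. Using public randomness I draw a hash $h:[k]\to[\tilde{k}]$ with $\tilde{k}=\Theta(s)$; the induced distribution $\tilde{p}=p\circ h^{-1}$ is automatically $s$-sparse on $[\tilde{k}]$. Each user then applies the optimal non-sparse $\ell$-bit protocol of~\eqref{eqn:sample-complexity} to $h(X_i)$ on the domain $[\tilde{k}]$, so that $\tilde{p}$ can be estimated within $\alpha/2$ in TV using $O(\tilde{k}^2/(\alpha^2\min\{\tilde{k},2^\ell\})) = O(s^2/(\alpha^2\min\{s,2^\ell\}))$ samples. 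The remaining $\log(k/s)$ factor arises from lifting the hashed estimate back to $[k]$: I run this basic protocol in parallel over an $\alpha$-cover of $\setsp$ and select the hypothesis that best fits the messages, paying $\log|\text{cover}|=O(s\log(k/s))$ via a union bound. This covering-set refinement is the same one invoked in the LDP proof and is what produces the tight $\log(k/s)$ (rather than $\log k$) overhead.

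\textbf{Lower bound.} The first term is immediate from restriction: distributions supported on a fixed size-$s$ subset lie in $\setsp$, so the non-sparse $\ell$-bit lower bound in~\eqref{eqn:sample-complexity} with $k\leftarrow s$ already forces $n=\Omega(s^2/(\alpha^2\min\{s,2^\ell\}))$. For the second term, I build a support-identifying family: for each size-$s$ subset $S\subseteq[k]$, let $p_S$ be (a slight smoothing of) the uniform distribution on $S$, so that $\tvd(p_S,p_{S'})=|S\triangle S'|/(2s)$. Restricting to pairs with $|S\triangle S'|=2\alpha s$ makes the family $\alpha$-separated in TV. Applying the $\ell$-bit chi-squared contraction bound of~\cite{AcharyaCT19} gives $\chi^2(W(p_S)\,\|\,W(p_{S'})) \lesssim 2^\ell\cdot\|p_S-p_{S'}\|_\infty\cdot\tvd(p_S,p_{S'}) \lesssim 2^\ell\alpha/s$ per sample, and Fano's inequality over the $\binom{k}{s}=\exp(\Theta(s\log(k/s)))$ hypotheses then forces $n\cdot 2^\ell\alpha/s = \Omega(s\log(k/s))$, i.e.\ $n=\Omega(s^2\log(k/s)/(\alpha\cdot 2^\ell))$. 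The unusual $1/\alpha$ scaling (rather than $1/\alpha^2$) is a consequence of the per-atom mass being the constant $1/s$: only one factor of $\alpha$ enters the chi-squared bound, through the TV separation between hypotheses.

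\textbf{Main obstacles.} Two subtleties drive the proof. On the upper side, a naive hash-and-estimate protocol only yields a $\log k$ overhead; achieving $\log(k/s)$ requires importing the covering-set argument from Theorem~\ref{thm:ldp_sparse} and adapting it to the $\ell$-bit channel class. On the lower side, the unusual $1/\alpha$ scaling forces the uniform-on-subset construction rather than a Paninski-style perturbation over a fixed background: the latter would shrink the per-atom mass to $\alpha/s$ and produce $\chi^2\lesssim 2^\ell\alpha^2/s^2$ per sample, costing a factor of $\alpha$ and missing the second term of the theorem.
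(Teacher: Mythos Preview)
Your sketch does not work as stated. A single public hash $h:[k]\to[\Theta(s)]$ collapses all $p\in\setsp$ sharing the same push-forward $\tilde p=p\circ h^{-1}$; no post-processing over a cover of $\setsp$ can distinguish them from the messages you produced, so there is no way to ``lift back.'' The covering step you cite from the LDP proof (Lemma~\ref{lem:sparse}) is also not what you describe: it covers the set of $2s$-sparse \emph{unit vectors} in $\ell_2$ in order to control $\sup_{\|e\|_0\le 2s}\langle\tilde p-p,e\rangle$ via a sub-Gaussian maximal inequality, relying on the linear/Hadamard structure of the messages; it is not a cover of the hypothesis class $\setsp$ used for tournament selection, and that linear structure is absent in the $\ell$-bit setting. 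The paper's protocol is structurally different: each user $i$ draws its \emph{own} independent hash $h_i:[k]\to[2^\ell]$ and sends $Y_i=h_i(X_i)$. Because $\Pr[h_i(x)=Y_i]=p(x)(1-2^{-\ell})+2^{-\ell}$ for every $x\in[k]$, the count $M(x)=|\{i:h_i(x)=Y_i\}|$ estimates $p(x)$ directly. The $\log(k/s)$ factor then arises from Chernoff bounds in a support-identification step (keep the $2s$ symbols with largest $M(x)$), not from any covering argument.

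\textbf{Lower bound, second term.} The pairwise inequality you invoke, $\chi^2(Wp_S\,\|\,Wp_{S'})\lesssim 2^\ell\,\|p_S-p_{S'}\|_\infty\,\tvd(p_S,p_{S'})$, is not part of the contraction framework of~\cite{AcharyaCT19} (which bounds $\EE_Z[\chi^2(p_Z^W,p_0^W)]$ to a fixed reference $p_0$, the quantity that actually controls $I(Z;Y^n)$ in Fano) and is false in general: if $W(y\mid x)=0$ for all $x\in S'$ but $W(y\mid x)>0$ for some $x\in S\setminus S'$, the left side is infinite. More importantly, your diagnosis that a ``perturbation over a fixed background'' would yield only $\chi^2\lesssim 2^\ell\alpha^2/s$ and hence miss the $1/\alpha$ scaling is exactly backwards. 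The paper uses precisely that construction---$p_z(0)=1-8\alpha$, $p_z(x)=8\alpha z_x/s$---and obtains $\EE_Z[\chi^2(p_Z^W,p_0^W)]\le 8\alpha\,2^\ell/s$ by lower-bounding the denominator $\EE_{X\sim p_0}[W(y\mid X)]$ with its ``signal'' part $\frac{8\alpha}{k}\sum_{x\ge 1}W(y\mid x)$ rather than the background term $W(y\mid 0)$; this cancels one factor of $\alpha$ from the numerator and is what produces the single power of $\alpha$. Your uniform-on-subset family, by contrast, has no $\alpha$ in the per-sample information at all (the analogous calculation gives $\EE_Z[\chi^2]\le 2^\ell/s$), so even with a correct reference-based Fano argument it would deliver only $n=\Omega\bigl(s^2\log(k/s)/2^\ell\bigr)$, missing the $1/\alpha$ you were trying to capture.
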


\noindent We briefly discuss the lower bound. Despite the gap, it is optimal for constant $\alpha$. The first term is exactly the lower bound when the support is known. The second term dominates when $\log (k/s)>1/\alpha$, i.e., the support is very sparse. 

\medskip
\noindent \textbf{Organization.} The remainder of the paper is organized as follows. We discuss related works in Section~\ref{sec:related}. We present algorithms and proofs for sparse estimation under LDP and communication constraints in Section~\ref{sec:ldp} and Section~\ref{sec:communication} respectively.

\subsection{Related work}
\label{sec:related}
Distribution estimation has a rich literature (see e.g., \cite{BarlowBBB72, Silverman86, DevroyeG85, DevroyeL01}, and references therein). There has been recent interest in distributed distribution estimation under communication and privacy constraints. For estimating discrete distributions under communication constraints, the optimal sample complexity is established in~\cite{HOW:18, pmlr-v97-acharya19a, barnes2019learning}, and for local privacy constraints in~\cite{DuchiJW13, ErlingssonPK14, KairouzBR16, YeB18, Bassily18, AcharyaSZ18, AcharyaS19}. \cite{AcharyaCT19, acharya2020interactive} unify both constraints under the framework of distributed inference under local information constraints, where optimal bounds are obtained under both non-interactive and interactive protocols. \cite{chen2020breaking} considers the trade-off between privacy and communication constraints and provides optimal bounds in all parameter regimes. \cite{murakami2019utility, acharya2020contextaware} study discrete distribution estimation under different privacy constraints on the symbols. 

\medskip
\noindent \cite{KairouzBR16} proposed an LDP distribution estimation 
algorithm for ``open alphabets'' which applies to the sparse setting. 
However, it requires public randomness and to the best of our knowledge, 
no theoretical guarantee of this 
method is provided for sparse distribution estimation. In contrast, the our 
algorithm is a private-coin algorithm that only requires one bit per user, 
and its sample complexity is proven to be optimal up to constant factors. 

\medskip
\noindent A closely related problem is heavy hitter detection under LDP 
constraints~{\citep{BassilyS15, JMLR:v21:18-786}, where no distributional 
assumption on the data is made. A modification of their heavy hitters 
algorithms provides a sub-optimal $O(s^2\log k/\alpha^2 \eps^2)$ 
sample complexity in terms of $\ell_1$ error for $\eps$-LDP distribution estimation .} 

\medskip
\noindent Statistical inference with sparsity assumption has been studied 
extensively for decades. The closest to our works are the Gaussian 
sequence model and high dimensional linear 
regression~\citep{donoho1994minimax, raskutti2011minimax, 
duchi2013distance}. In these applications, it is assumed that the 
observations are linear transforms of the underlying parameter plus 
independent Gaussian noises on each dimension. In 
Section~\ref{sec:ldp_upper_a}, it can be seen that using 
Algorithm~\ref{alg:hadamard}, the histogram of observations can also be 
seen as a linear transform of the parameter of interest, however, with 
dependent noises on each dimension. We borrow ideas from these works in 
proving the upper bound. However, the lower bound part requires new 
proofs due to the dependency structure.

\medskip 
\noindent A few recent works study sparse estimation under information constraints. \cite{DuchiJW13} and~\cite{wang2019sparse} consider the 1-sparse case and study mean estimation and linear regression under LDP constraints respectively. \scolor{\cite{pmlr-v99-duchi19a} provides lower bounds for sparse Gaussian mean estimation under LDP constraints via communication complexity.} \cite{barnes2020fisher} considers estimating the mean of product Bernoulli distribution when the mean vector is sparse, which is different from the $k$-ary setting considered in this paper.  \scolor{\cite{shamir2014fundamental} considers the problem of detecting the biased coordinate of product Bernoulli distributions under communication constraints, which can be viewed as a 1-sparse detection problem.}
\cite{ZhangDJW13, garg2014communication, BravermanGMNW16, HOW:18} consider sparse Gaussian mean estimation under communication constraints (\cite{garg2014communication, BravermanGMNW16} consider interactive protocols with the goal of bounding the total amount of communication from all users). It was shown that under a fixed communication budget, the rate still scales linearly with the ambient dimension of the problem instead of the logarithmic dependence in the discrete case considered in this paper.

\section{Sparse estimation under LDP constraints}
\label{sec:ldp}
In this section we will establish the sample complexity of sparse 
distribution estimation under LDP constraints. 
In Section~\ref{sec:ldp_upper_a}, we analyze a private-coin algorithm 
where each user sends only one-bit messages detailed in 
Algorithm~\ref{alg:hadamard}. The algorithm has two 
steps, listed below.
\begin{enumerate}
\item 
Using the private-coin Hadamard Response algorithm in~\cite{AcharyaS19}, players send one-bit messages. 
\item
The server projects a vector obtained from these messages onto $\setsp$ to obtain the final estimate.
\end{enumerate}
 We note that this algorithm is similar to that in~\cite{AcharyaS19, 
 Bassily18} where in the projection step they project onto $\triangle_{\ab}$ 
 to estimate distributions without sparsity assumptions. While the 
 algorithm is simple, to obtain the tight upper bounds, our analysis relies 
 on a standard but involved covering-based techniques in sparse 
 estimation. We also remark that a sample-optimal scheme can also be 
 obtained using the popular RAPPOR mechanism~\citep{ErlingssonPK14, 
 KairouzBR16}, which has higher communication overhead. We present this 
 algorithm in the appendix for completeness.

In Section~\ref{sec:ldp_lower}, we present a matching lower bound to prove 
the optimality of the aforementioned algorithm. The proof relies on 
applying the recently developed chi-squared contraction 
method~\citep{AcharyaCT19} and a variant of Fano's inequality 
in~\cite{duchi2013distance}.
\amargin{do we still use this variant of duchi?}
\smargin{YES}

\begin{algorithm}[h]
	\caption{1-bit Hadamard Response with Projection}
	\label{alg:hadamard}
	\LinesNumbered
	\KwIn{$X_1, \ldots, X_n$ i.i.d. from $p\in\setsp$, the sparsity parameter $s$. }
	\KwOut{$\hp\in\triangle_\ab:$ an estimate of $p$.} 
	
	Let $\oab=2^{\lceil\log_2(k+1) \rceil}$ be the smallest power of 2 more 
	than $\ab$.\label{step:one}
	
	For $y \in[\oab]$, let $B_y:=\{x\in[\oab]:H_{\oab}(x, y)=1\}$ be 
	the rows where the $y$th column has 1. 
	
	Divide the $\ns$ users into $\oab$ sets $S_1, \ldots, S_\oab$ deterministically by assigning all $i \equiv  j \mod K$ to $S_j$ for $i\in[\ns]$.
	
	$\forall j \in [\oab]$ and $\forall i \in S_j$, the distribution of the 
	one-bit message $Y_i$ is
	\begin{align}
		\label{equ:scheme}
		\Pr(Y_i=1)=\begin{cases}
			\frac{e^\eps}{e^\eps + 1}, &X_i\in B_{j},\\
			\frac{1}{e^\eps + 1},&\text{otherwise},
		\end{cases}
	\end{align}
	namely if $H_\oab(X_i, j) = 1$, we send 1 with higher probability than 0.  
	
	Let $\vecpsetemp:=(\psetemp_1, \ldots, \psetemp_{\oab})$ where $\forall j \in [\oab]$,
$\psetemp_j :=\frac{1}{|S_j|}\sum_{i\in S_j}Y_i$ is the fraction of messages from $S_j$ that are 1.  
	
	Compute intermediate estimates for 
	\[ 
		\tp_\oab:=\frac{e^\eps+1}{\oab (e^\eps - 1)} H_\oab (2\vecpsetemp-\mathbf{1}_\oab).
	\] \label{step:before_projection} 
	
	Keep the first $\ab$ elements of $\tp_\oab$, i.e., $\tp:=\tp_\oab^{[\ab]}$ and project it onto $\setsp$.
	\[
		\hp := \min_{\p \in \setsp} \norm{\tp - \p}{2}^2.
	\] \label{step:projection}
\end{algorithm}

\subsection{Upper bounds under LDP constraints} \label{sec:ldp_upper_a}

We now establish the sample and time complexity of Algorithm~\ref{alg:hadamard}. 

Steps~\ref{step:one}--\ref{step:before_projection}   of Algorithm~\ref{alg:hadamard} are identical to~\cite{AcharyaS19}, who showed a time complexity of $\tilde{O}(\ns+\ab)$. For the final step, where we project the vector $\tilde{p}_K$ on to $\setsp$, we can use~\cite[Algorithm 1]{kyrillidis2013sparse}, which runs in $\tilde{O}(\ab)$ time, proving the overall time complexity.

Algorithm~\ref{alg:hadamard} uses Hadamard matrices. For $m$ that is a 
power of two, let $H_m$ be the $m\times m$ Hadamard matrix with entries 
in $\{-1,1\}$. The privacy guarantee of the algorithm follows 
from~\eqref{equ:scheme}, which obeys~\eqref{eqn:lsp}. 
A key property we use is the following claim from~\cite{AcharyaS19}, which shows a relationship between underlying distribution and the message distributions. 

\begin{claim}[\cite{AcharyaS19}]	\label{clm:hr}
In~\eqref{equ:scheme}, let $\pset_j := \condprob{Y_i = 1 }{i \in S_j}$	for $j \in [\oab]$. Let $\vecpset:=(\pset_1, \ldots, \pset_{\oab})$. Let $\p_K$ be the distribution over $[K]$ obtained by appending $K-\ab$ zeros to $\p$. Then, 
		\begin{align}
			\p_\oab = \frac{(e^\eps+1)}{\oab (e^\eps - 1)} H_\oab (2\vecpset - \mathbf{1}_\oab).\label{eqn:hr-performance}
		\end{align}
\end{claim}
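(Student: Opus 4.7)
The proof is a direct computation relying on two facts: the definition of the Hadamard Response scheme in~\eqref{equ:scheme}, and the self-inverse property $H_K H_K = K \cdot I_K$ of the Hadamard matrix with $\pm 1$ entries.

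First, I would expand $\pset_j = \condprob{Y_i = 1}{i \in S_j}$ by conditioning on whether $X_i \in B_j$. By~\eqref{equ:scheme},
\[
\pset_j = \frac{e^\eps}{e^\eps+1} \Pr(X_i \in B_j) + \frac{1}{e^\eps+1} \Pr(X_i \notin B_j) = \frac{1}{e^\eps+1} + \frac{e^\eps - 1}{e^\eps+1} \Pr(X_i \in B_j).
\]
Rearranging gives the clean form
\[
2\pset_j - 1 = \frac{e^\eps - 1}{e^\eps+1}\bigl(2\Pr(X_i \in B_j) - 1\bigr).
\]

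Second, I would rewrite $\Pr(X_i \in B_j)$ in terms of the Hadamard matrix. Using that $H_\oab$ has $\pm 1$ entries, so $\mathbf{1}[H_\oab(x,j) = 1] = \tfrac{1}{2}(1 + H_\oab(x,j))$, and that $X_i$ is distributed as $p_\oab$ on $[\oab]$ (since we padded $p$ with zeros),
\[
\Pr(X_i \in B_j) = \sum_{x \in [\oab]} p_\oab(x)\, \frac{1 + H_\oab(x,j)}{2} = \frac{1}{2} + \frac{1}{2}(H_\oab p_\oab)_j,
\]
where I use symmetry of $H_\oab$ to identify $\sum_x p_\oab(x) H_\oab(x,j)$ with the $j$-th coordinate of $H_\oab p_\oab$. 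Substituting back,
\[
2\vecpset - \mathbf{1}_\oab = \frac{e^\eps - 1}{e^\eps + 1}\, H_\oab p_\oab.
\]

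Finally, I would multiply both sides on the left by $H_\oab$ and invoke $H_\oab H_\oab = \oab \cdot I_\oab$, which isolates $p_\oab$ and yields exactly~\eqref{eqn:hr-performance}. There is no real obstacle here; the only care needed is keeping the normalization $1/\oab$ straight and noting that the padding ensures $X_i$ really has distribution $p_\oab$ on the full domain $[\oab]$ so that the Hadamard identity applies to a vector of the right dimension.
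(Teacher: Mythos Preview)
Your proof is correct and is the standard derivation; the paper itself does not prove this claim but simply cites it from~\cite{AcharyaS19}, so there is no ``paper's own proof'' to compare against. Your computation is exactly the expected one.
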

By definiton of $\hp$, we have $\norm{\tp - \hp}{2}^2 \le \norm{\tp - \p}{2}^2$. Hence
\[
\norm{\tp - \p}{2}^2 \ge \norm{\tp - \hp}{2}^2 = \norm{\tp - \p}{2}^2 + \norm{\p - \hp}{2}^2  + 2 \langle\tp - \p, \p - \hp\rangle.
\]
Rearranging the terms, we have
\begin{equation} \label{eqn:key_inequality}
\norm{\hp-\p }{2}^2 \le 2 \langle\tp - \p, \hp - \p\rangle.
\end{equation} 
We bound the right hand side by analyzing the projection step (Step~\ref{step:projection}), and using Claim~\ref{clm:hr}. 
 The proof of the lemma is from standard covering number arguments from 
 high dimensional 
 sparse regression~\citep{raskutti2011minimax}, and is provided in 
 Appendix~\ref{app:sparse-ub}.  
\begin{lemma}\label{lem:sparse}
	\[
	\langle \tp - \p, \hp - \p\rangle \le  { \frac{25(e^\eps+1)}{(e^\eps-1)}\frac{\sqrt{s \log(2\ab/s)} }{\sqrt{\ns}} \norm{\hp - p}{2}}.
	\]
\end{lemma}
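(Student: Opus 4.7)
The plan is to exploit the fact that both $\p$ and $\hp$ lie in $\setsp$, so the residual $\hp - \p$ is $2s$-sparse. Hence the inner product is controlled by the restricted dual norm of $\tp - \p$: writing $T_{2s} := \{v\in\RR^\ab : \norm{v}{0}\le 2s,\ \norm{v}{2}\le 1\}$, Cauchy--Schwarz on the (at most) $2s$ nonzero coordinates of $\hp - \p$ gives
\[
\langle \tp - \p,\ \hp - \p\rangle \ \le\ \norm{\hp - \p}{2}\cdot \sup_{v\in T_{2s}}\langle \tp - \p,\ v\rangle.
\]
It therefore suffices to show that, with probability at least (say) $0.95$, the supremum is bounded by $\tfrac{25(e^\eps+1)}{e^\eps-1}\sqrt{s\log(2k/s)/n}$.

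Next I would convert the supremum into a maximum of sub-Gaussian linear forms via Claim~\ref{clm:hr}. Zero-padding $v\in T_{2s}$ to $\tilde v\in\RR^\oab$, and using $H_\oab^{\top}=H_\oab$,
\[
\langle \tp - \p,\ v\rangle \ =\ \langle \tp_\oab - \p_\oab,\ \tilde v\rangle \ =\ \frac{2(e^\eps+1)}{\oab(e^\eps-1)}\bigl\langle \vecpsetemp - \vecpset,\ H_\oab\tilde v\bigr\rangle.
\]
Because the partitioning into the $\oab$ groups is deterministic and disjoint, the coordinates $\psetemp_j-\pset_j$ are independent, mean-zero, and (by Hoeffding applied to an average of $\{0,1\}$-valued variables) sub-Gaussian with parameter $1/(4|S_j|)$. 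Combined with $\sum_j(H_\oab\tilde v)_j^2=\oab\,\norm{v}{2}^2$ and $|S_j|\ge n/(2\oab)$ (the $n<\oab$ case makes the conclusion vacuous), this shows $\langle\tp-\p,v\rangle$ is sub-Gaussian with variance parameter $\sigma^2\norm{v}{2}^2$ where $\sigma^2 = O\!\bigl((e^\eps+1)^2/((e^\eps-1)^2 n)\bigr)$.

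To bound the supremum uniformly, I would split by support: for each fixed $S\subseteq[\ab]$ with $|S|\le 2s$, take a standard $\tfrac12$-net of the unit ball of $\RR^S$ of size at most $6^{2s}$, so that $\sup_{\mathrm{supp}(v)\subseteq S,\ \norm{v}{2}\le 1}\langle \tp-\p,v\rangle \le 2\max_{u\in N_S}\langle \tp-\p,u\rangle$. Union-bounding over the $\binom{\ab}{2s}\le(e\ab/(2s))^{2s}$ possible supports and their nets yields at most $(3e\ab/s)^{2s}$ sub-Gaussian tests. A standard maximal-inequality bound $\Pr\bigl[\max_u|\langle\tp-\p,u\rangle|>\sigma\sqrt{2\log(|N|/\delta)}\bigr]\le 2\delta$ then gives, with probability $\ge 0.95$,
\[
\sup_{v\in T_{2s}}\langle \tp-\p,v\rangle \ \le\ C\sigma\sqrt{s\log(2\ab/s)},
\]
and inserting the value of $\sigma$ and tracking constants produces the claimed factor.

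The main obstacles are bookkeeping rather than ideas: first, the sub-Gaussian parameter must be computed carefully through the Hadamard transform with possibly unequal $|S_j|$; second, the constants in the covering number $(3e\ab/s)^{2s}$, the doubling from the $\tfrac12$-net, the Hoeffding factor of $1/4$, and the $\sqrt{2}$ from the sub-Gaussian tail all have to be chased to land at the stated constant $25$. Once sub-Gaussianity is set up correctly, the rest is a textbook sparse-estimation covering argument essentially identical to the one used in the Gaussian sequence model in~\cite{raskutti2011minimax}.
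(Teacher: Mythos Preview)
Your proposal is correct and shares the same skeleton as the paper's proof: both establish that each linear form $\langle\tp-\p,v\rangle$ is mean-zero sub-Gaussian with variance proxy $\sigma^2\norm{v}{2}^2$, $\sigma^2=\tfrac{2}{n}\bigl(\tfrac{e^\eps+1}{e^\eps-1}\bigr)^2$, via Claim~\ref{clm:hr} and the orthogonality $H_\oab^\top H_\oab=\oab\,\II_\oab$, and both then control the supremum over the $2s$-sparse unit ball by a covering argument from~\cite{raskutti2011minimax}.

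The one genuine difference is in how the covering error is absorbed. You take a $\tfrac12$-net of each $2s$-dimensional coordinate ball and use the standard doubling trick $\sup_v\langle\tp-\p,v\rangle\le 2\max_{u\in N}\langle\tp-\p,u\rangle$, so a single maximal inequality over $\binom{k}{2s}6^{2s}$ points finishes the job. The paper instead takes a $\rho$-net with $\rho=\sqrt{s/k}$ and bounds the residual $\rho\,\norm{\hp-\p}{2}\,\norm{\tp-\p}{2}$ using a separate high-probability $\ell_2$ bound $\norm{\tp-\p}{2}\le 7\tfrac{e^\eps+1}{e^\eps-1}\sqrt{k/n}$ coming from the utility analysis of Hadamard Response, and then union-bounds the two events. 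Your route is slightly more self-contained (one event instead of two, no external utility bound), at the cost of a somewhat larger net; the paper's route reuses an existing guarantee. Both land at the same $O\bigl(\sigma\sqrt{s\log(k/s)}\bigr)$ rate, and the constant $25$ is loose enough in either calculation that the bookkeeping you flag is indeed the only remaining work.
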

We can now prove the sample complexity bound as follows. 
\begin{align}
	\dtv{\hp}{\p}=\frac{1}{2}\norm{\hp - \p}{1} 	&\le \frac{1}{2}\sqrt{2s}\norm{\hp - \p}{2}\label{eqn:csi} \\
	&\le \frac{40s \sqrt{\log(2\ab/s)}}{\sqrt{\ns}}\frac{e^\eps+1}{e^\eps-1},\label{eqn:sc-bound}
\end{align}
where~\eqref{eqn:csi} applies Cauchy-Schwarz inequality on the $2s$-sparse vector $\hp-p$, and~\eqref{eqn:sc-bound} is from plugging Lemma~\ref{lem:sparse} in~\eqref{eqn:key_inequality}. Plugging in $\dtv{\hp}{p}=\alpha$, and using $e^\eps-1=O(\eps)$ for $\eps=O(1)$ gives us the desired sample complexity bound of $n=O\Paren{s^2 \max \{ \log (k/s), 1\}/{\alpha^2 \eps^2}}$.

\subsection{Lower bound under LDP constraints} \label{sec:ldp_lower}
We now prove the sample complexity lower bound in Theorem \ref{thm:ldp_sparse} using the chi-squared contraction method in~\cite{AcharyaCT19} and an extension of Fano's method from~\cite{duchi2013distance}. 

For simplicity of analysis we add $0$ to the underlying domain and 
consider distributions over $[k]\cup\{0\}$. Let $\cZ_{\ab,s}\subseteq\{0, 
1\}^k$ be all $k$-ary binary strings with $s$ one's. Then, $|\cZ_{\ab,s}|={k 
\choose s}$.  We will restrict to $\cP_{\ab,s}:=\{p_z: z\in \cZ_{\ab, s}\}$, 
and $\p_z$ is described below for $z\in\cZ_{\ab,s}$:
\begin{equation}\label{eqn:pz}
	p_z(x)=\begin{cases}1-8\alpha,&\text{for }x=0,\\
	\frac{8\alpha z_x}{s}, &\text{for $x=1,\ldots, \ab$},
	\end{cases}
\end{equation}
where $z_x$ is the $x$th coordinate of $z$. Since $s$ of the $z_x$'s are 
one, $\sum_{x=1}^\ab p_z(x) = 8 \alpha$ and $\p_z$ is a valid distribution.

Let $Z :=(Z_1, \ldots, Z_k)$ be a uniform random variable over 
$\cZ_{\ab,s}$. Let $Y^\ns:=(Y_1, \ldots, Y_\ns)$ be the output of an 
$\eps$-LDP scheme whose input are $X^\ns= (X_1, \ldots, X_{\ns})$, 
drawn i.i.d. from $p_Z$, and $\hp$ is such that $\probof{\dtv{\p} 
{\hp(Y^\ns)} \le \alpha} \ge  0.9$. In other words, we can estimate 
distributions in $\cP_{\ab,s}$ to within $\alpha$ in total variation distance 
with probability at least 0.9. Let $\hat Z\in\cZ_{\ab,s}$ be such that 
$\p_{\hat Z}$ is the distribution in  $\cP_{\ab,s}$ closest to $\hp(Y^\ns)$ 
in $\tvd$. Then, we have 
\[
	\frac{4\alpha}{s} \ham{Z}{\hat{Z}}  = \dtv{\p_Z}{\p_{\hat{Z}}} \le \dtv{\hp}{\p_{\hat{Z}}} + \dtv{p_Z}{\hp} \le 2  \dtv{\p_Z}{\hp}.
\]
Since $\probof{\dtv{\p_Z}{\hp(Y^\ns)} \le \alpha} \ge  0.9$, we have $\probof{\ham{Z}{\hat{Z}}\le s/2} \ge 0.9$, which implies using the estimator $\hat p$, we can estimate the underlying $Z$ to within Hamming distance $s/2$. We now state a form of Fano's inequality from~\cite{duchi2013distance}, adapted to our setting. 

\begin{lemma}[Corollary 1 \cite{duchi2013distance}]
Let $\cZ_{\ab, s}\subseteq\{0,1\}^\ab$ and $Z$ be uniformly distributed over $\cZ_{\ab, s}$.  For  $t\ge 0$, define the maximum neighborhood size at radius $t$
\[
	N_t^{max}:=\max_{z\in\cZ}\{|z'\in\cZ:\ham{z}{z'}|\le t\},
\]
to be the maximum number of elements of $\cZ_{\ab, s}$ in a Hamming ball of radius $t$.
If $|\cZ_{\ab, s}|\ge 2 N_t^{max}$, then for any Markov chain $Z- Y^\ns- \hat{Z}$,
\begin{equation*}
\probof{\ham{\hat{Z}}{Z}>t}\ge 1-\frac{I(Z;Y^\ns)+\log 2}{\log|\cZ_{\ab, s}|-\log N_t^{max}}.
\end{equation*}
\label{lem:fano}
\end{lemma}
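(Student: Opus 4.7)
The plan is to prove this local (or generalized) Fano inequality by following the classical Fano template but combining it with a packing-style bound on the conditional entropy of $Z$ given the estimate. Concretely, I would begin by using the data processing inequality along the Markov chain $Z - Y^n - \hat Z$ to reduce the problem to bounding the error of reconstructing $Z$ from $\hat Z$ alone: we have $I(Z; \hat Z) \le I(Z; Y^n)$, and since $Z$ is uniform on $\cZ_{k,s}$, this gives $H(Z \mid \hat Z) \ge \log|\cZ_{k,s}| - I(Z; Y^n)$.

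Next, I would introduce the error indicator $E \eqdef \mathbf{1}\{\ham{Z}{\hat Z} > t\}$ and decompose the conditional entropy in the other direction. Using the chain rule $H(Z \mid \hat Z) \le H(E \mid \hat Z) + H(Z \mid \hat Z, E)$, bounding $H(E \mid \hat Z) \le H(E) \le \log 2$, and splitting the second term by conditioning on $E$, I would obtain
\[
H(Z \mid \hat Z) \le \log 2 + \Pr[E=0]\log N_t^{\max} + \Pr[E=1]\log|\cZ_{k,s}|.
\]
The key observation for this step is that conditioned on $\hat Z$ and on $E=0$, the random variable $Z$ lies in the Hamming ball of radius $t$ around $\hat Z$, which by definition has at most $N_t^{\max}$ points; conditioned on $E=1$, we trivially have at most $|\cZ_{k,s}|$ possibilities.

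Combining the two bounds and solving for $\Pr[E=1]$ yields
\[
\Pr[\ham{Z}{\hat Z} > t] \ge 1 - \frac{I(Z; Y^n) + \log 2}{\log|\cZ_{k,s}| - \log N_t^{\max}},
\]
which is the desired inequality. The assumption $|\cZ_{k,s}| \ge 2 N_t^{\max}$ is used to guarantee that the denominator $\log|\cZ_{k,s}| - \log N_t^{\max} \ge \log 2$ is strictly positive, so the bound is meaningful.

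The main technical step, and the one I expect to be the most delicate, is the packing bound $H(Z \mid \hat Z, E=0) \le \log N_t^{\max}$: one must argue uniformly in $\hat Z$ that the support of the conditional distribution of $Z$ lies in a Hamming ball of radius $t$ centered at $\hat Z$, then invoke the definition of $N_t^{\max}$ as the maximum cardinality of such a ball intersected with $\cZ_{k,s}$. Everything else is bookkeeping via the chain rule, uniformity of $Z$, and the data processing inequality, and does not depend on the specific structure of the LDP channels, which is why the lemma applies cleanly to our setting once $I(Z; Y^n)$ is controlled (e.g., via chi-squared contraction).
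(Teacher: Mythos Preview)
The paper does not prove this lemma; it is imported verbatim as Corollary~1 of \cite{duchi2013distance} and used as a black box, so there is no ``paper's proof'' to compare against. Your proposed argument is correct and is exactly the standard derivation of this local Fano bound: data processing along $Z-Y^n-\hat Z$ to get $H(Z\mid\hat Z)\ge \log|\cZ_{k,s}|-I(Z;Y^n)$, then the Fano-style decomposition $H(Z\mid\hat Z)=H(E\mid\hat Z)+H(Z\mid\hat Z,E)$ with $E=\mathbf{1}\{\ham{Z}{\hat Z}>t\}$, bounding the $E=0$ branch by $\log N_t^{\max}$ and the $E=1$ branch by $\log|\cZ_{k,s}|$, and solving for $\Pr[E=1]$.

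One small caveat worth making explicit in the delicate step you flag: the bound $H(Z\mid \hat Z=\hat z,E=0)\le \log N_t^{\max}$ uses that the Hamming ball of radius $t$ around $\hat z$ contains at most $N_t^{\max}$ points of $\cZ_{k,s}$, but $N_t^{\max}$ as stated is a maximum over centers $z\in\cZ_{k,s}$, not over arbitrary $\hat z$. This is harmless here because in the paper's application $\hat Z$ is constructed to lie in $\cZ_{k,s}$, and more generally one may assume without loss that any estimator is rounded to its nearest element of $\cZ_{k,s}$ (which can only decrease the Hamming error). With that caveat noted, your proof is complete.
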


Substituting $t=s/2$, and using $\probof{\ham{Z}{\hat{Z}}\le s/2} \ge 0.9$ with this lemma gives 
\begin{align}
\frac{I(Z;Y^\ns)+\log 2}{\log|\cZ_{\ab,s}|-\log N_{s/2}^{max} }>0.9.
\label{eqn:bound-MI} 
\end{align}

Using chi-squared contraction bounds from~\cite{AcharyaCT19}, we upper bound $I(Z; Y^\ns)$ in the next lemma. 
\begin{lemma}
\label{lem:info}
Let $Z$ be uniformly drawn from $\cZ_{\ab, s}$ and $Y^\ns$ be the outputs of $n$ users, 
\begin{equation*}
I(Z;Y^\ns)=O\left(\frac{\ns\alpha^2(e^\eps-1)^2}{s}\right).
\end{equation*}
\end{lemma}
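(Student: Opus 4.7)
The plan is to bound $I(Z; Y^n)$ via a per-sample chi-squared estimate that leverages both the $\eps$-LDP constraint and the negative correlation in the sparse prior. First I would reduce to a single sample: for any SMP protocol, conditioning on the public randomness $U$ (if any) makes $Y_1, \ldots, Y_n$ conditionally independent given $Z$, since the $X_i$'s are i.i.d.\ from $p_Z$ and private channel randomness is independent across users. This gives $I(Z; Y^n) \le I(Z; Y^n \mid U) \le \sum_{i=1}^n \EE_U[I(Z; Y_i \mid U)]$, so it suffices to prove, for any fixed deterministic $\eps$-LDP channel $W$ and $Y = W(X)$ with $X \sim p_Z$, that $I(Z; Y) = O(\alpha^2 (e^\eps-1)^2 / s)$.

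For the single-sample bound, use $D(\cdot \| \cdot) \le \chi^2(\cdot \| \cdot)$ to get $I(Z; Y) \le \EE_Z \chi^2(W p_Z \| W \bar p)$, where $\bar p := \EE_Z p_Z$. Since $\EE_Z[Wp_Z(y)] = W\bar p(y)$, this rewrites as
\begin{align*}
\EE_Z \chi^2(W p_Z \| W \bar p) = \sum_y \frac{\Var_Z[W p_Z(y)]}{W \bar p(y)}.
\end{align*}
The $\eps$-LDP constraint is then inserted via the parametrization $W(y \mid x) = W(y \mid 0)(1 + \delta(y, x))$ with $\delta(y, 0) = 0$ and $|\delta(y, x)| \le e^\eps - 1$. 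Combined with $\sum_x p_Z(x) = 1$ and the definition \eqref{eqn:pz}, this yields
\begin{align*}
W p_Z(y) = W(y \mid 0) \Paren{1 + \tfrac{8\alpha}{s} \sum_{x=1}^k \delta(y, x) Z_x},
\end{align*}
so the variance becomes $\Var_Z[W p_Z(y)] = W(y \mid 0)^2 \cdot (8\alpha / s)^2 \cdot \Var_Z\!\bigl[\sum_x \delta(y, x) Z_x\bigr]$.

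The heart of the argument is the structure of the prior: since $Z$ is uniform over $\cZ_{k, s}$, the coordinates satisfy $\Var(Z_x) \approx s/k$ and are \emph{negatively} correlated, with $\Cov(Z_x, Z_{x'}) < 0$ for $x \ne x'$. Dropping the (non-positive) cross term and using $|\delta(y, x)| \le e^\eps - 1$ on each of at most $k$ coordinates gives $\Var_Z\!\bigl[\sum_x \delta(y, x) Z_x\bigr] \le \tfrac{s(k-s)}{k(k-1)} \cdot k (e^\eps - 1)^2 = O\paren{s (e^\eps - 1)^2}$. Plugging back yields $\Var_Z[W p_Z(y)] = O\paren{W(y \mid 0)^2 \cdot \alpha^2 (e^\eps - 1)^2 / s}$, and then using $W\bar p(y) \ge e^{-\eps} W(y \mid 0)$ together with $\sum_y W(y \mid 0) = 1$ and $\eps = O(1)$ yields the claimed single-sample bound; multiplying by $n$ completes the proof. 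The principal obstacle is obtaining the sparsity factor $1/s$: a naive DJW-style bound based on $\dtv{p_Z}{p_{Z'}}^2$ gives only $O(n \alpha^2 (e^\eps - 1)^2)$, which is too weak by a factor of $s$. The negative correlation of the Hamming-weight-$s$ prior is precisely what recovers this factor through the chi-squared contraction.
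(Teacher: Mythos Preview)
Your approach is essentially the same as the paper's: tensorize to a single sample, bound $I(Z;Y)$ by a chi-squared, and exploit the negative correlation of the uniform prior on $\cZ_{k,s}$ to extract the $1/s$ factor. The paper centers at $W_{\min}^y := \min_x W(y\mid x)$ rather than $W(y\mid 0)$, but the two are interchangeable up to constants.

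There is, however, one genuine slip in your variance step. With your parametrization $W(y\mid x)=W(y\mid 0)(1+\delta(y,x))$, the LDP constraint only gives $e^{-\eps}-1\le \delta(y,x)\le e^\eps-1$, so the $\delta$'s can have \emph{mixed signs}. Consequently the cross term $\sum_{x\ne x'}\delta(y,x)\delta(y,x')\,\mathrm{Cov}(Z_x,Z_{x'})$ is \emph{not} non-positive in general, and you cannot simply drop it as stated. (Indeed, if you had just dropped it, the coefficient in front of $\sum_x\delta(y,x)^2$ would be $\Var(Z_x)=\tfrac{s(k-s)}{k^2}$, not the $\tfrac{s(k-s)}{k(k-1)}$ you wrote.) Two easy fixes: either (i) center at $W_{\min}^y$ as the paper does, which forces all $\eta_x^y\ge 0$ and makes the cross term genuinely $\le 0$; or (ii) keep your parametrization but rewrite
\[
\Var_Z\!\Bigl[\sum_x \delta(y,x)Z_x\Bigr]
=\bigl[\Var(Z_1)-\mathrm{Cov}(Z_1,Z_2)\bigr]\sum_x \delta(y,x)^2+\mathrm{Cov}(Z_1,Z_2)\Bigl(\sum_x\delta(y,x)\Bigr)^2,
\]
and drop the last term, which \emph{is} non-positive. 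Either way you recover $\Var_Z\le \tfrac{s(k-s)}{k(k-1)}\sum_x\delta(y,x)^2\le s(e^\eps-1)^2$, and the rest of your argument goes through unchanged.
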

Next. we lower bound $\log {|\cZ_{\ab,s}|}-\log {N_{s/2}^{max}}$ using 
standard Gilbert-Varshamov type arguments in the following lemma.
\begin{lemma} Let $1\le s\le \ab/100$, then 
\begin{equation*}
\log {|\cZ_{\ab,s}|}-\log {N_{s/2}^{max}}\ge \frac s8\log \Paren{\frac {\ab}{s}}.
\end{equation*}
\label{lem:packing}
\end{lemma}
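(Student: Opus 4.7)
The plan is to work with the exact value $|\cZ_{\ab,s}| = \binom{\ab}{s}$ and to obtain a clean closed-form upper bound on $N_{s/2}^{\max}$, so the lemma reduces to a direct binomial estimate. Since coordinate permutations act transitively on $\cZ_{\ab,s}$ while preserving Hamming distance, the maximum $N_{s/2}^{\max}$ is attained at every point; I fix the canonical representative $z_0 = (1,\ldots,1,0,\ldots,0)$ with $s$ ones followed by $\ab-s$ zeros.

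The key observation is that any two elements of $\cZ_{\ab,s}$ have the same number of ones, so their Hamming distance is always even. Concretely, $\ham{z_0}{z'} = 2j$ iff $z'$ is obtained by flipping $j$ of the first $s$ coordinates from $1$ to $0$ and $j$ of the last $\ab - s$ coordinates from $0$ to $1$, giving $|\{z' \in \cZ_{\ab,s} : \ham{z_0}{z'} = 2j\}| = \binom{s}{j}\binom{\ab-s}{j}$. Therefore
\[
N_{s/2}^{\max} = \sum_{j=0}^{\lfloor s/4 \rfloor} \binom{s}{j}\binom{\ab-s}{j}.
\]
In the regime $\ab \ge 100 s$, both binomial factors are increasing in $j$ over the range $j \le s/4$, so each summand is dominated by the $j = \lfloor s/4 \rfloor$ term, yielding $N_{s/2}^{\max} \le s \binom{s}{s/4}\binom{\ab-s}{s/4}$. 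Applying the standard inequality $\binom{n}{m} \le (en/m)^m$ to each factor then gives $N_{s/2}^{\max} \le s \cdot (16 e^2 \ab/s)^{s/4}$, while the matching elementary lower bound $\binom{\ab}{s} \ge (\ab/s)^s$ handles the numerator.

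Taking logarithms produces
\[
\log |\cZ_{\ab,s}| - \log N_{s/2}^{\max} \;\ge\; \tfrac{3s}{4}\log(\ab/s) - \tfrac{s}{4}\log(16 e^2) - \log s,
\]
and the hypothesis $s \le \ab/100$ provides $\log(\ab/s) \ge \log 100$, which is comfortably large enough to absorb $\tfrac{s}{4}\log(16 e^2) + \log s$ into $\tfrac{5s}{8}\log(\ab/s)$ while leaving the required $\tfrac{s}{8}\log(\ab/s)$ slack. The only mildly awkward step I anticipate is this final numerical bookkeeping, since the inequality must hold uniformly for every integer $s \ge 1$ rather than only asymptotically; the factor-$100$ safety margin in the hypothesis makes it routine, but one has to verify a handful of small cases by hand or slightly sharpen the binomial estimates if a tighter constant is desired.
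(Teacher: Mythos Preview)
Your proposal is correct and follows essentially the same route as the paper: both arguments upper bound $N_{s/2}^{\max}$ combinatorially, lower bound $\binom{\ab}{s}$ by $(\ab/s)^s$, and finish with elementary binomial estimates of the form $\binom{n}{m}\le(en/m)^m$. The only difference is in the combinatorial step: the paper uses the looser one-shot overcount $N_{s/2}^{\max}\le\binom{s}{s/2}\binom{\ab-s/2}{s/2}$ (which actually covers all vectors within Hamming distance $s$, not just $s/2$), whereas you compute the Hamming ball exactly as $\sum_{j\le s/4}\binom{s}{j}\binom{\ab-s}{j}$ and bound the sum by $s$ times its largest term. Your bound is slightly sharper in the leading constant ($\tfrac{3s}{4}\log(\ab/s)$ versus the paper's $\tfrac{s}{2}\log(\ab/s)$ before absorbing lower-order terms), at the cost of a bit more bookkeeping for very small $s$; the paper's overcount avoids the sum entirely and is marginally quicker to write down.
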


Plugging these two bounds in~\eqref{eqn:bound-MI} gives the tight sample complexity lower bound for sparse distribution estimation under $\eps$-local differential privacy. 

We now prove these lemmas.
\begin{proof}[Proof of Lemma~\ref{lem:info}]
For a distribution $q$ over $[\ab]\cup\{0\}$ and a channel $W$, let $q^W$ 
denote output distribution of $Y$ when $X \sim q$ and $Y=W(X)$. Then, 
\begin{align}
	q^W(y)=\sum_{x\in[\ab]\cup\{0\}}W(y \mid x)q(x).\label{eqn:out-message}
\end{align}

Let $p_0$ be the average of all distributions in $\cP_{\ab,s}$. Then  
$p_0(0)=1-8\alpha$, and $p_0(x)=8\alpha/k$ for $x=1,\ldots, \ab$. We 
will use chi square contraction bound in~\cite{AcharyaCT19} to bound the 
maximum value of $I(Z;Y^\ns)$ in terms of the $\chi^2$-divergence 
between the output distributions induced by distributions in $\cP_{\ab,s}$ 
and by $p_0$ and as follows:
\begin{align}
I(Z;Y^\ns)& \le  \ns\cdot \max_{W\in \cW_\eps}\expectover{Z}{\chi^2\left(\p_Z^W, \p_0^W\right)}\label{eqn:chi-sq}\\
	&= \ns \cdot\max_{W\in \cW_\eps}\sum_y \frac{\expectover{Z}{\Paren{\sum_{x=1}^\ab (\p_Z(x)-\p_0(x)) W(y \mid x)}^2}}{\expectover{X \sim \p_0}{W(y\mid X)}},
	\label{eqn:chi-square}
\end{align}
where~\eqref{eqn:chi-sq} is from the chi-squared contraction bound, 
and~\eqref{eqn:chi-square} is by using~\eqref{eqn:out-message} in the 
definition of $\chi^2$-divergence\footnote{$\chi^2 (p,q):=\sum_x 
(p(x)-q(x))^2/q(x)$.}.

For an $\eps$-LDP channel $W\in\cW_\eps$, let $W_{\rm min}^y:=\min_x W(y\mid x)$. By~\eqref{eqn:lsp}, we have $W(y\mid x) = W_{\rm min}^y +\eta^y_x\cdot W_{\rm min}^y$ for some $0 \le \eta^y_x \le e^{\eps}-1$. Furthermore, for $z \in\cZ_{\ab,s}$, by the definition of $p_z$,  $\p_z(x) -\p_0(x) = 8\alpha \Paren{{\frac{z_x}{s}}-\frac{1}{\ab}}$, and $\sum_x z_x=s$, thus giving
\begin{align*}
\sum_x (\p_z(x)\!-\!\p_0(x)) W(y\mid x) = 
8\alpha \sum_x \Paren{{\frac{z_x}{s}}\!-\!\frac{1}{\ab}}\Paren{W_{\rm 
min}^y +W_{\rm min}^y\eta^y_x}= 8\alpha W_{\rm min}^y \cdot\sum_x 
\Paren{{\frac{z_x}{s}} \!-\! \frac{1}{\ab}}\eta^y_x.
\end{align*}

Since $Z$ is uniformly distributed over $\cZ_{\ab,s}$, elementary computations show that  $\expectation{Z_x}=\expectation{Z_x^2}=s/\ab$, and for $x_1\ne x_2 \in[\ab]$,
$	\expectation{Z_{x_1}Z_{x_2}} = {\ab-2 \choose s-2}\big/{\ab\choose s}=\frac{s(s-1)}{\ab(\ab-1)}.$

Therefore, 
\begin{align*}
&\ \expectover{Z}{\Paren{\sum_x (\p_Z(x)-\p_0(x)) W(y\mid x) }^2}\\
&= 64\alpha^2 (W_{\rm min}^y )^2 \cdot \Paren{\sum_{x_1,x_2}\expectover{Z}{\frac 1{\ab^2} -\frac{Z_{x_1}+Z_{x_2}}{s\ab}+\frac{Z_{x_1}Z_{x_2}}{s^2}}\eta^y_{x_1}\eta^y_{x_2}}\\
&= 64\alpha^2 (W_{\rm min}^y )^2 \Paren{\sum_{x}\left[\frac{1}{s\ab}-\frac 1{\ab^2}\right](\eta^y_x)^2+\sum_{x_1\ne x_2}\expectover{Z}{-\frac 1{\ab^2}+\frac{s-1}{s\ab(\ab-1)}}\eta^y_{x_1}\eta^y_{x_2}}\\
&\le 64\alpha^2 (W_{\rm min}^y )^2 \Paren{\frac{(\max_x \eta^y_x)^2}{s}},
\end{align*}
and 
\begin{align*}
\frac{\expectover{Z}{\Paren{\sum_x (\p_Z(x)-\p_0(x)) W(y\mid x) }^2}}{\expectover{X \sim \p_0}{W(y|X)}} \le 64\alpha^2  \Paren{\frac{(\max_x \eta^y_x)^2}{s}}\cdot W_{\rm min}^y .
\end{align*}

Using $\sum_y W_{\rm min}^y \le 1$, and $\eta_x^y \le e^\eps-1$, we obtain 
\begin{align*}
\expectover{Z}{\chi^2(\p_Z^W,\p_0^W)} = O\Paren{\frac{\alpha^2 (e^\eps-1)^2}s}.
\end{align*}
Combining with \eqref{eqn:chi-square}, this completes the proof.
\end{proof}

\begin{proof}[Proof of Lemma~\ref{lem:packing}]
Let $z \in \cZ_{\ab,s}$. A vector in $\cZ_{\ab,s}$ that is at most $s/2$ 
away from $z$ in Hamming distance can be obtained as follows: Fix $s/2$ 
coordinates in $z$ that are 1, and from the remaining $\ab-s/2$ 
coordinates, choose $s/2$ coordinates and make them 1. All other 
coordinates are set to zero. Therefore,
\[
	N_{s/2}^{max}\le {s \choose s/2}{\ab-s/2\choose s/2}.
\]
Recall that $|\cZ_{\ab,s}|={\ab\choose s}$. Using Stirling's approximation for binomial coefficients\footnote{For $ 1\le s\le k$, we have $\left(\frac{k}{s}\right)^s\le {k\choose s}\le \left(\frac{ke}{s}\right)^s$.},
we get
\begin{align*}
\log \frac{|\cZ|}{N_{s/2}^{max}}\! \ge \!\log \frac{{\ab \choose s}}{ {s 
\choose s/2}{\ab-s/2\choose s/2}} 
\!\ge\! \log 
\frac{\left(\frac{\ab}{s}\right)^s}{\left(2e\right)^{s/2}\left(\frac{(2\ab-s)e}{s}\right)^{s/2}}
\!\ge\! \log  
\frac{\left(\frac{\ab}{s}\right)^s}{\left(2e\right)^{s/2}\left(\frac{(2\ab)e}{s}\right)^{s/2}}
\!=\! \frac{s}{2}\log\left(\frac{\ab}{4e^2s}\right)\!,
\end{align*}
which is at least $\frac s8\log \frac{\ab}{s}$ when $s\le \ab/100$.
\end{proof}

\section{Sparse estimation under communication constraints}
\label{sec:communication}
We now prove guarantees of Theorem~\ref{thm:lbit_sparse} and establish 
upper and lower bounds for communication constrained sparse discrete 
distribution estimation. In Section~\ref{sec:lbit_upper}, we propose an 
algorithm that requires public randomness with sample complexity given in 
Theorem~\ref{thm:lbit_sparse}. Designing a private-coin protocol for 
estimation is an open question. In Section~\ref{sec:lbit_lower} we establish 
the lower bounds. 

\subsection{Upper bounds under communication constraints} \label{sec:lbit_upper}
Note that the sample complexity upper bound in Theorem~\ref{thm:lbit_sparse} has $\min\left\{2^\ell, s\right\}$, which equals $s$ when $\ell\ge\log s$. 
We therefore only consider $\ell\le\log s$ since if $\ell >\log s$, we can just use $\log s$ bits and get the same bound.

Our first step is to use public randomness to design hash functions at the users. 	A randomized mapping $h: [\ab] \rightarrow [2^{\ell}]$ is a \emph{random hash function} if $\forall x \in[\ab], y \in [2^\ell]$,
	\[
		  \probof{h(x) = y} = \frac{1}{2^{\ell}}.
	\]

\noindent\textbf{The scheme.} Let $h_1, \ldots, h_\ns$ be $\ns$ independent hash functions, available at the users and at the server. User $i$'s $\ell$ bit output is $Y_i=h_i(X_i)\in[2^\ell]$.
The probability of $x\in[\ab]$ being in the preimage of user $i$'s message 
$Y_i$ is, 
\begin{align}\label{eqn:pre-image}
 \probof{Y_i = h_i(x)}= \p(x)+(1-\p(x))\frac1{2^\ell} = 
 \p(x)\left(1-\frac{1}{2^\ell}\right)+\frac{1}{2^\ell}=: \bias(x).
 \end{align}

\noindent\textbf{The estimator.} Upon receiving messages $Y_1,\ldots, 
Y_\ns$, the estimator is as follows, 

\begin{enumerate}
	\item
	The first $\nsein$ messages are used to obtain a set $T\subseteq[k]$ 
	with $|T|=O(s)$ such that with high probability $p(T)>1-\alpha/2$.  
	\item 
	 With the remaining messages we estimate $p(x)$ for $x\in T$. 
\end{enumerate}

\noindent We now describe and analyze the two steps. 

\noindent\textbf{Step 1.} Let $Y_1, \ldots, Y_{\nsein}$ be the first $n/2$ 
messages, where $Y_i = h_i(X_i)$. For $x=1,\ldots, \ab$, let 
\[
\ms(x):=|\{i:h_i(x)=Y_i, 1\le i \le \nsein\}|
\]
 be the number of these messages in the first half whose preimage $x$ 
 belongs to. Let $T\subseteq [\ab]$ be the set of symbols with the largest 
 $|T|= 2s$ values of $\ms(x)$'s. If $\p(x)$ is large we expect $\ms(x)$ to 
 be large. In particular, we show that for sufficiently large $n$, the 
 probability of symbols not in $T$ is small.
\begin{lemma}
\label{lem:step-one}
There is a constant $C_1$ such that for $n = C_1\cdot{s^2\log 
(\ab/s)}/({\alpha^2 \min\{2^\ell,s\}})$ with probability at least $0.95$, 
$p(T):=\sum_{x\in T}p(x) \ge 1-\alpha/2$.
\end{lemma}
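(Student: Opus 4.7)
The plan is to partition the true support $S := \mathrm{supp}(p)$ (with $|S| \le s$) into a heavy part $S_H := \{x \in S : p(x) \ge \alpha/(4s)\}$ and a light part $S_L := S \setminus S_H$; by construction, $p(S_L) \le s \cdot \alpha/(4s) = \alpha/4$. It then suffices to show that with probability at least $0.95$ the top-$2s$ set $T$ captures almost all of the heavy mass, namely $p(T \cap S_H) \ge p(S_H) - \alpha/4$, which combined with $p(S_H) = 1 - p(S_L) \ge 1 - \alpha/4$ yields $p(T) \ge 1 - \alpha/2$.

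\textbf{Threshold and events.} Recall from~\eqref{eqn:pre-image} that $M(y)$ is a sum of $n/2$ independent Bernoullis with parameter $\bias(y)$. For $y \notin S$, $\bias(y) = 1/2^\ell$ and $\mathbb{E}[M(y)] = \mu := n/(2\cdot 2^\ell)$; for $x \in S_H$, $\mathbb{E}[M(x)] = \mu + (n/2)\cdot p(x)\bigl(1 - 1/2^\ell\bigr) \ge \mu + \Omega(n\alpha/s)$. I would fix a separation threshold $\theta = \mu + c\cdot n\alpha/s$ for a small absolute constant $c$ (chosen so that $\mathbb{E}[M(x)] \ge \theta + \Omega(n\alpha/s)$ for every $x \in S_H$) and prove two high-probability events. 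First, (E1) $\bigl|\{y \notin S : M(y) \ge \theta\}\bigr| \le s$; combined with $|S| \le s$, this ensures that at most $2s$ symbols satisfy $M(y) \ge \theta$, so all such symbols lie in $T$. Second, (E2) $\sum_{x \in S_H} p(x)\cdot\mathbb{1}[M(x) < \theta] \le \alpha/4$, i.e., the aggregated mass of heavy symbols that fail to cross the threshold is small. Together (E1) and (E2) yield $p(T \cap S_H) \ge p(S_H) - \alpha/4$.

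\textbf{Getting the $\log(k/s)$ factor.} The critical step is to bound (E1) by Markov's inequality on the count, rather than by a union bound over the $k - |S|$ non-support symbols. Since $\mathbb{E}\bigl[|\{y \notin S : M(y) \ge \theta\}|\bigr] \le k \cdot p_{\mathrm{bad}}$, where $p_{\mathrm{bad}}$ is the Bernstein upper-tail bound for a single $\mathrm{Bin}(n/2, 1/2^\ell)$, forcing the expectation to be at most $s/20$ requires only $p_{\mathrm{bad}} \le s/(20k)$; the Bernstein exponent must exceed $\log(20k/s)$. Instantiating Bernstein with deviation $\theta - \mu = \Theta(n\alpha/s)$ and variance bounded by $\mu$ produces the required condition $n \ge C_1 s^2 \log(k/s)/(\alpha^2 \cdot 2^\ell)$. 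Event (E2) is handled similarly by Markov applied to the random mass $\sum_x p(x)\mathbb{1}[M(x) < \theta]$, whose expectation is at most $\max_{x \in S_H}\Pr(M(x) < \theta)\cdot p(S_H)$; a mere $O(\alpha)$ upper bound on this per-symbol tail suffices, which gives a strictly weaker sample-size requirement than (E1).

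\textbf{Main obstacle.} The principal subtlety is precisely the substitution of Markov for union bound, which is what saves the logarithm from $\log k$ down to $\log(k/s)$. A secondary (more technical) difficulty is calibrating $\theta$ and the heavy/light cutoff $\alpha/(4s)$ so that the Bernstein denominator is dominated by the variance $\mu$ rather than by the amplitude $\theta - \mu$; this happens iff $2^\ell \le s/\alpha$, which holds in the relevant regime since we have already reduced to $\ell \le \log s$ and $\alpha < 1$. One must also verify, for event (E2), that at the boundary case $p(x) = \alpha/(4s)$ the heavy symbol's expected $M$-value exceeds $\theta$ by at least $\Omega(n\alpha/s)$, so that Bernstein's lower tail again yields a small constant probability at the same sample size.
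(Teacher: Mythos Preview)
Your overall architecture matches the paper's proof: partition the support into light and heavy parts, use Markov's inequality on the \emph{count} of non-support symbols exceeding a threshold (this is exactly what saves $\log k$ down to $\log(k/s)$), and then control the heavy mass that fails to clear the threshold. Event (E1) is handled correctly and in the same way as the paper.

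The gap is in (E2). Your bound $\mathbb{E}\bigl[\sum_{x\in S_H}p(x)\mathbb{1}[M(x)<\theta]\bigr]\le \max_{x\in S_H}\Pr(M(x)<\theta)\cdot p(S_H)$ is valid, but the claim that the per-symbol tail is $O(\alpha)$ at the stated sample size is false at the boundary $p(x)=\alpha/(4s)$. There the Bernstein exponent is $\Theta\bigl(n\alpha^2 2^\ell/s^2\bigr)=\Theta\bigl(C_1\log(k/s)\bigr)$, so the tail is $(s/k)^{\Theta(C_1)}$, a constant in $\alpha$. Hence $\mathbb{E}[\text{missed heavy mass}]$ is only bounded by a constant, and Markov gives nothing once $\alpha$ is small. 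Equivalently, forcing the uniform tail to be $O(\alpha)$ would require $n\gtrsim s^2\log(1/\alpha)/(\alpha^2 2^\ell)$, which is \emph{not} weaker than (E1)'s $s^2\log(k/s)/(\alpha^2 2^\ell)$ when $\alpha<s/k$.

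The fix, which is what the paper does, is to exploit the $p(x)$-dependence of the tail rather than take a uniform maximum. Write $p(x)=(1+\zeta_x)\cdot\alpha/(cs)$ for $x\in S_H$; then the Bernstein exponent scales like $(1+\zeta_x)^2\log(k/s)$, so
\[
\sum_{x\in S_H}p(x)\Pr(M(x)<\theta)\;\le\;\frac{\alpha}{cs}\sum_{x\in S_H}(1+\zeta_x)\,(s/k)^{\,c'(1+\zeta_x)^2}.
\]
Each summand is at most $(s/k)^{c'}\le 1/100^{c'}$ (the function $(1+\zeta)(s/k)^{c'(1+\zeta)^2}$ is maximized at $\zeta=0$), there are at most $s$ terms, and the prefactor $\alpha/(cs)$ supplies the needed $\alpha$. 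This yields $\mathbb{E}[\text{missed heavy mass}]=O(\alpha)$ and Markov finishes (E2). The paper additionally separates out the regime $p(x)>\beta/\gamma$ (your ``amplitude-dominated'' Bernstein regime), where the tail is easily $O(\alpha)$ anyway.
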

\noindent\textbf{Step 2.}  For $x=1,\ldots, \ab$, let $N(x):=|\{i:h_i(x)=Y_i, 
\nsein<i\le \ns \}|$ be the number of messages  in the second half such 
that $x$ belongs to the preimage of $Y_i$. Our final estimator is given by
\begin{equation}\label{eqn:phat-comm}
\hp(x) = \begin{cases}
	\frac{(2^\ell N(x)/(\nsein))-1}{2^\ell-1},& \text{if $x\in T$}\\
	0, &\text{otherwise.}
\end{cases}
\end{equation}

The following lemma shows that $\hp$ converges to $p$ over $T$. 
\begin{lemma}
	\label{thm:utility_com}
	There is a constant $C_2$ such that for $n= C_2\cdot s^2/({\alpha^2 
	\min\{2^\ell,s\}})$ with probability at least $0.95$,
	\[
		\sum_{x\in T} |\hp(x)-p(x)| \le \frac{\alpha}{2}.
	\]
\end{lemma}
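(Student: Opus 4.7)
The plan is to exploit the independence between Step~1 and Step~2 and reduce the $\ell_1$-error to a variance computation via Cauchy--Schwarz. Since $h_1, \dots, h_\ns$ are independent and $T$ is determined only by $(h_i, X_i)_{i \le \nsein}$ while each $N(x)$ depends only on $(h_i, X_i)_{i > \nsein}$, I would condition on $T$ and treat it as a fixed set of size $2s$. By~\eqref{eqn:pre-image}, for each fixed $x$ the indicator $\mathbf{1}\{h_i(x) = Y_i\}$ is Bernoulli with mean $\bias(x) = \p(x)(1 - 1/2^\ell) + 1/2^\ell$, so $N(x) \sim \mathrm{Bin}(\nsein, \bias(x))$. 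A direct calculation from~\eqref{eqn:phat-comm} then yields $\expectation{\hp(x)} = \p(x)$ and $\variance{\hp(x)} = \frac{2^{2\ell}}{\nsein (2^\ell - 1)^2}\bias(x)(1-\bias(x)) \le \frac{4\bias(x)}{\nsein}$ for all $\ell \ge 1$.

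Next I would apply Cauchy--Schwarz over $T$ followed by Jensen's inequality, noting that correlation across the $N(x)$'s is irrelevant here since linearity of expectation already gives $\expectation{\sum_{x\in T}(\hp(x)-\p(x))^2} = \sum_{x\in T}\variance{\hp(x)}$. Setting $S := \sum_{x\in T}|\hp(x)-\p(x)|$, this yields
\[
\expectation{S} \le \sqrt{|T|\cdot\sum_{x\in T}\variance{\hp(x)}} \le \sqrt{\frac{8s}{\nsein}\sum_{x\in T}\bias(x)}.
\]
Using $\sum_{x\in T}\p(x)\le 1$ and $|T|=2s$ gives $\sum_{x\in T}\bias(x) \le 1 + 2s/2^\ell$. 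Invoking the standing reduction $\ell \le \log s$ from the top of Section~\ref{sec:lbit_upper}, the second term dominates and one obtains $\expectation{S} = O(s/\sqrt{2^\ell \ns})$.

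Finally, choosing $\ns = C_2 s^2/(\alpha^2 \cdot 2^\ell) = C_2 s^2/(\alpha^2 \min\{2^\ell, s\})$ with $C_2$ large enough makes $\expectation{S}\le\alpha/40$, and Markov's inequality then gives $\Pr(S > \alpha/2) \le 1/20$ as required. The main obstacle to watch for is that $\bias(x)\ge 1/2^\ell$ regardless of how small $\p(x)$ is, so the naive simplification $\variance{\hp(x)} \le 4\p(x)/\nsein$ fails; the $|T|/2^\ell$ contribution in $\sum_{x\in T}\bias(x)$ must be carried, and it is precisely this term that forces the $2^\ell$ factor in the denominator of the sample complexity.
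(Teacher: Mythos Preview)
Your proposal is correct and essentially identical to the paper's own proof: both compute $\variance{\hp(x)}$ from the binomial distribution of $N(x)$, sum $\bias(x)$ over $T$ to get the $1 + 2s/2^\ell$ term, apply Cauchy--Schwarz and Jensen to pass from $\ell_1$ to $\ell_2$, and finish with Markov. Your explicit conditioning on $T$ via the independence of the two halves and your invocation of the standing reduction $\ell\le\log s$ are just spelled out a bit more than in the paper, but the argument is the same.
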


Combining Lemma~\ref{lem:step-one} and Lemma~\ref{thm:utility_com}, by the union bound we get that with $n$ samples, with probability at least $0.9$, $\norm{\hp-p}{1}\le\alpha$, establishing upper bound of Theorem~\ref{thm:lbit_sparse}.

\subsection{Lower bounds under communication constraints} \label{sec:lbit_lower}
We now prove the sample complexity lower bound in Theorem~\ref{thm:lbit_sparse}. The first term of $\Omega(\frac{s^2}{\alpha^2 \min\{2^\ell, s\}})$ follows from~\eqref{eqn:sample-complexity} and holds even with the knowledge of the support $S$. 

We prove the second term by considering the construction $\cP_{\ab,s}$ as in Section~\ref{sec:ldp_lower}, and bound~\eqref{eqn:bound-MI}. The lower bound on $\log(|\cZ|/N_t^{max})$ is the same as from Lemma~\ref{lem:packing}. Analogous to Lemma~\ref{lem:info}, we will now bound the mutual information $I(Z;Y^\ns)$ by $O(\ns\alpha 2^\ell/s)$ as follows. As in~\eqref{eqn:chi-square}, we have the following bound
\begin{align}
I(Z;Y^\ns)&\le  \ns \cdot\max_{W\in \cW_\ell}\sum_y \frac{\expectover{Z}{\Paren{\sum_{x=1}^\ab (\p_Z(x)-\p_0(x)) W(y \mid x)}^2}}{\expectover{X \sim \p_0}{W(y\mid X)}},
	\label{eqn:chi-square-comm}
\end{align} 
and we now bound it for each $y\in[2^\ell]$. Similar to the expansion in proving the LDP lower bounds, we have
\begin{align*}
	&\quad\expectover{Z}{\Paren{\sum_x (\p_Z(x)-\p_0(x)) W(y\mid x)}^2}\\
	&= 64\alpha^2\expectover{Z}{\sum_{x_1, x_2}\Paren{\frac{1}{k^2}-\frac{Z_{x_1}+Z_{x_2}}{sk}+\frac{Z_{x_1}Z_{x_2}}{s^2}}W(y\mid x_1)W(y\mid x_2)}\\
	&= 64\alpha^2\sum_{x=1}^k\left(\frac{1}{sk}-\frac{1}{k^2}\right)W(y\mid x)^2+64\alpha^2 \sum_{x_1\ne x_2}\left(-\frac{1}{k^2}+\frac{s-1}{sk(k-1)}\right) W(y\mid x_1)W(y\mid x_2) \\
	&	\le  64\alpha^2\sum_{x=1}^k\left(\frac{1}{sk}-\frac{1}{k^2}\right)W(y\mid x)^2.
	\end{align*}
Note that 
$\expectover{X \sim p_0}{W(y\mid X)}=(1-8\alpha)W(y\mid 0)+\sum_{x=1}^k\frac{8\alpha}{k}W(y\mid x).$
Hence,
\begin{align*}
    \expectover{Z}{\chi^2(p_Z^W, p_0^W)}\le & 64\alpha^2\sum_{y}\frac{\Paren{\sum_{x=1}^k\left(\frac{1}{sk}-\frac{1}{k^2}\right)W(y\mid x)^2}}{(1-8\alpha)W(y\mid 0)+\sum_{x=1}^k\frac{8\alpha}{k}W(y\mid x)}\\
    &\le 64\alpha^2 \left(\frac{1}{sk}-\frac{1}{k^2}\right)\sum_{y}\frac{\Paren{\sum_{x=1}^kW(y\mid x)^2}}{\sum_{x=1}^k\frac{8\alpha}{k}W(y\mid x)}\\
    &\le \frac{8\alpha}{s}\sum_y \frac{\sum_{x=1}^kW(y\mid x)^2}{\sum_{x=1}^kW(y\mid x)}\\
    &\le \frac{8\alpha}{s}2^\ell,
\end{align*}
where we used that $W(y|x)^2\le W(y|x)$, proving the lower bound.

\section{Experiments}
\label{sec:experiment}
To verify our bounds, we evaluate our algorithms on sythetic datasets. We 
fix the support size $k=5000$ and draw the data from
distributions uniform over a subset with size $s$ which takes values in 
$2^i, 
i=1, \ldots, 12$. 

For LDP, we set $\eps=0.5, 0.9$ and draw $n=3\times10^6$ samples. 
For communication, we set $\ell=1, \ldots, 7$ and $n=400000$. The 
estimation errors in $d_{TV}$ with respect to sparsity $s$ are shown in Figure 
\ref{fig:all_exp}. In both experiments, we observe significant increase in 
accuracy when the sparsity decreases. \newzs{And in both experiments, we 
observe that under the same sparsity, larger $\eps$ (LDP) or $\ell$ 
(communication) leads to a better 
utility, which is consistent with our theoretical analysis.}

Moreover, for LDP we compare the Hadamard Response algorithm with 
sparse projection \newzs{(exact procedures in 
Algorithm~\ref{alg:hadamard})} 
and 
regular projection \newzs{(projecting onto $\setk$ instead of $\setsp$ in 
Step~\ref{step:projection} of Algorithm~\ref{alg:hadamard})}. \newzs{The 
results 
show that sparse 
projection, which needs the knowledge of $s$, leads to a significant 
improvement in utility. Straightforward modifications of our proof shows 
that projecting onto $\setk$, which 
doesn't need to know $s$ in advance, will lead to a risk bound of 
$O(s\sqrt{\log k}/\sqrt{n})$ instead of $O(s\sqrt{\log (k/s)}/\sqrt{n})$ in 
Theorem~\ref{thm:ldp_sparse}. Whether this gap is inevitable is an 
interesting direction to explore.}
\begin{figure}
    \centering
    \includegraphics[width=0.45\linewidth]{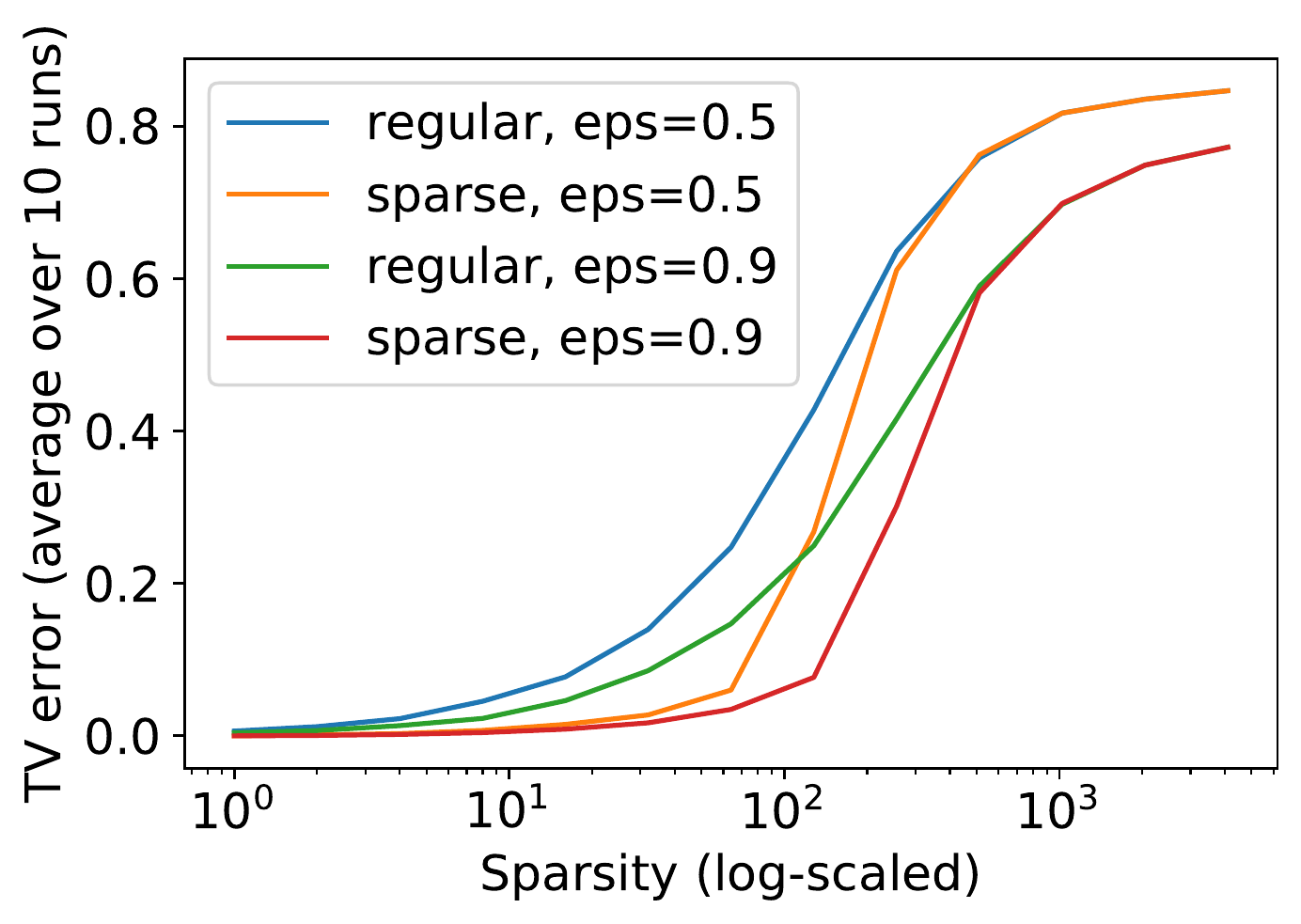}\includegraphics[width=0.45\linewidth]{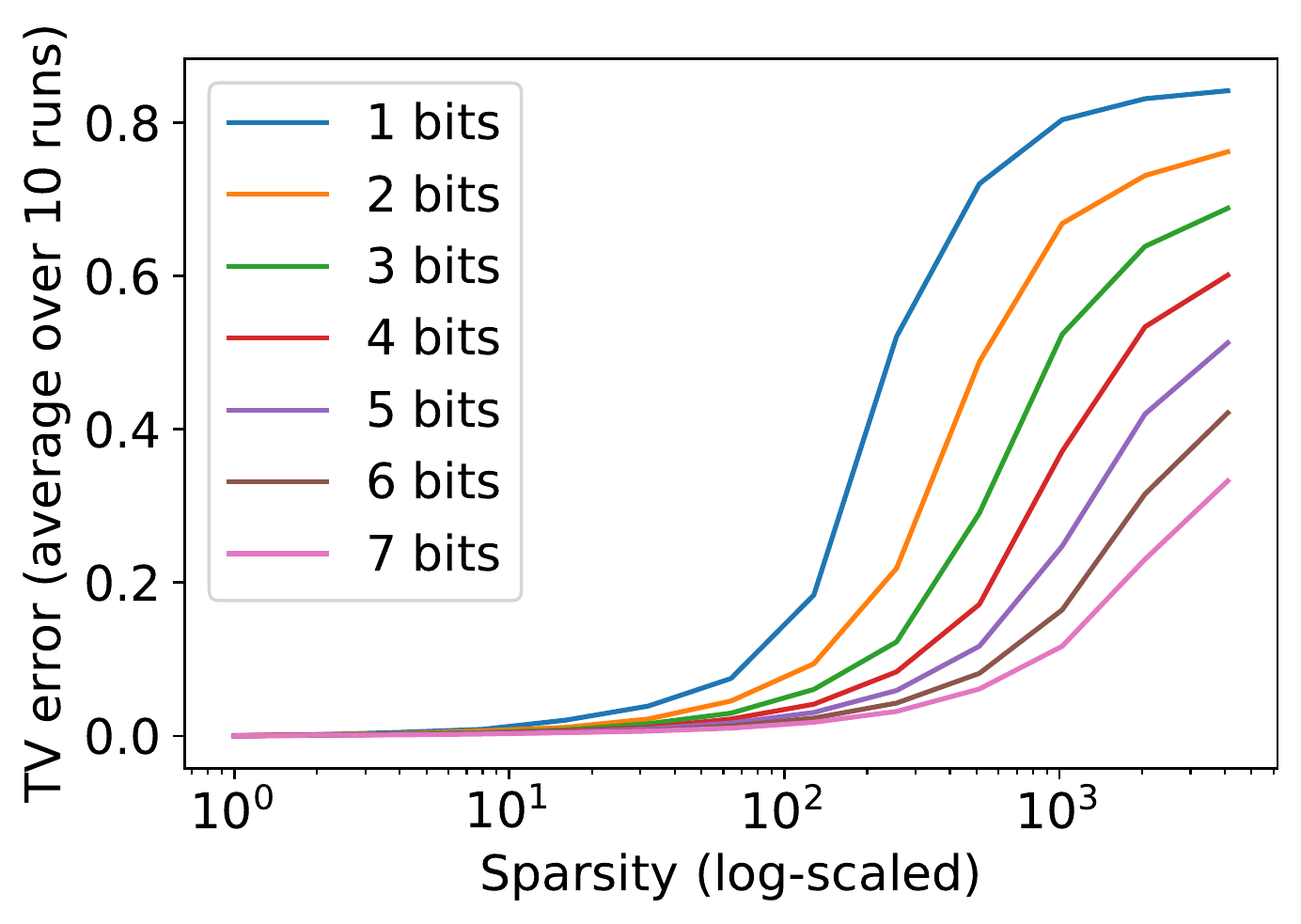}
    \caption{Estimation errors of the proposed algorithms under different 
    support sizes. \textbf{Left}: LDP (comparing regular and sparse projections); \textbf{Right}: communication 
    constraints.}
    \label{fig:all_exp}
\end{figure}

\bibliographystyle{plain}
\bibliography{masterref}

\begin{thebibliography}{10}

\bibitem{acharya2020contextaware}
Jayadev Acharya, Kaylee Bonawitz, Peter Kairouz, Daniel Ramage, and Ziteng Sun.
\newblock Context-aware local differential privacy.
\newblock In {\em Proceedings of the 37th International Conference on Machine
  Learning}, 2020.

\bibitem{pmlr-v97-acharya19a}
Jayadev Acharya, Clement Canonne, and Himanshu Tyagi.
\newblock Communication-constrained inference and the role of shared
  randomness.
\newblock volume~97 of {\em Proceedings of Machine Learning Research}, pages
  30--39, Long Beach, California, USA, 09--15 Jun 2019. PMLR.

\bibitem{acharya2020interactive}
Jayadev Acharya, Cl{\'e}ment~L Canonne, Yuhan Liu, Ziteng Sun, and Himanshu
  Tyagi.
\newblock Interactive inference under information constraints.
\newblock {\em arXiv preprint arXiv:2007.10976}, 2020.

\bibitem{AcharyaCT19}
Jayadev Acharya, Cl{\'e}ment~L Canonne, and Himanshu Tyagi.
\newblock Inference under information constraints: Lower bounds from chi-square
  contraction.
\newblock {\em Proceedings of Machine Learning Research vol}, 99:1--15, 2019.

\bibitem{AcharyaS19}
Jayadev Acharya and Ziteng Sun.
\newblock Communication complexity in locally private distribution estimation
  and heavy hitters.
\newblock In {\em Proceedings of the 36th International Conference on Machine
  Learning}, 2019.

\bibitem{AcharyaSZ18}
Jayadev Acharya, Ziteng Sun, and Huanyu Zhang.
\newblock Differentially private testing of identity and closeness of discrete
  distributions.
\newblock In {\em Advances in Neural Information Processing Systems}, pages
  6879--6891, 2018.

\bibitem{BarlowBBB72}
Richard~E. Barlow, David~J. Bartholomew, J.M. Bremner, and H.D. Brunk.
\newblock {\em Statistical Inference under Order Restrictions}.
\newblock Wiley, New York, 1972.

\bibitem{barnes2020fisher}
L.~P. {Barnes}, W.~N. {Chen}, and A.~{Özgür}.
\newblock Fisher information under local differential privacy.
\newblock {\em IEEE Journal on Selected Areas in Information Theory},
  1(3):645--659, 2020.

\bibitem{barnes2019learning}
Leighton~Pate Barnes, Yanjun Han, and Ayfer Ozgur.
\newblock Lower bounds for learning distributions under communication
  constraints via fisher information.
\newblock {\em Journal of Machine Learning Research}, 21(236):1--30, 2020.

\bibitem{Bassily18}
Raef Bassily.
\newblock Linear queries estimation with local differential privacy.
\newblock In {\em The 22nd International Conference on Artificial Intelligence
  and Statistics}, pages 721--729, 2019.

\bibitem{JMLR:v21:18-786}
Raef Bassily, Kobbi Nissim, Uri Stemmer, and Abhradeep Thakurta.
\newblock Practical locally private heavy hitters.
\newblock {\em Journal of Machine Learning Research}, 21(16):1--42, 2020.

\bibitem{BassilyS15}
Raef Bassily and Adam Smith.
\newblock Local, private, efficient protocols for succinct histograms.
\newblock In {\em Proceedings of the 47th Annual ACM Symposium on Theory of
  Computing}, pages 127--135. ACM, 2015.

\bibitem{BravermanGMNW16}
Mark Braverman, Ankit Garg, Tengyu Ma, Huy~L Nguyen, and David~P Woodruff.
\newblock Communication lower bounds for statistical estimation problems via a
  distributed data processing inequality.
\newblock In {\em Proceedings of the 48th Annual ACM Symposium on Theory of
  Computing}, pages 1011--1020. ACM, 2016.

\bibitem{chen2020breaking}
Wei-Ning Chen, Peter Kairouz, and Ayfer Ozgur.
\newblock Breaking the communication-privacy-accuracy trilemma.
\newblock In H.~Larochelle, M.~Ranzato, R.~Hadsell, M.~F. Balcan, and H.~Lin,
  editors, {\em Advances in Neural Information Processing Systems}, volume~33,
  pages 3312--3324. Curran Associates, Inc., 2020.

\bibitem{DevroyeG85}
Luc Devroye and L\'{a}szl\'{o} Gy\"{o}rfi.
\newblock {\em Nonparametric Density Estimation: The $L_1$ View}.
\newblock John Wiley \& Sons, 1985.

\bibitem{DevroyeL01}
Luc Devroye and G\'abor Lugosi.
\newblock {\em Combinatorial Methods in Density Estimation}.
\newblock Springer, 2001.

\bibitem{donoho1994minimax}
David~L Donoho and Iain~M Johnstone.
\newblock Minimax risk overl p-balls forl p-error.
\newblock {\em Probability Theory and Related Fields}, 99(2):277--303, 1994.

\bibitem{pmlr-v99-duchi19a}
John Duchi and Ryan Rogers.
\newblock Lower bounds for locally private estimation via communication
  complexity.
\newblock volume~99 of {\em Proceedings of Machine Learning Research}, pages
  1161--1191, Phoenix, USA, 25--28 Jun 2019. PMLR.

\bibitem{DuchiJW13}
John~C Duchi, Michael~I Jordan, and Martin~J Wainwright.
\newblock Local privacy and statistical minimax rates.
\newblock In {\em Proceedings of the 54st Annual IEEE Symposium on Foundations
  of Computer Science}, FOCS '13, pages 429--438. IEEE, 2013.

\bibitem{duchi2013distance}
John~C Duchi and Martin~J Wainwright.
\newblock Distance-based and continuum fano inequalities with applications to
  statistical estimation.
\newblock {\em arXiv preprint arXiv:1311.2669}, 2013.

\bibitem{ErlingssonPK14}
{\'U}lfar Erlingsson, Vasyl Pihur, and Aleksandra Korolova.
\newblock {RAPPOR}: Randomized aggregatable privacy-preserving ordinal
  response.
\newblock In {\em Proceedings of the 2014 ACM Conference on Computer and
  Communications Security}, CCS '14, pages 1054--1067, New York, NY, USA, 2014.
  ACM.

\bibitem{garg2014communication}
Ankit Garg, Tengyu Ma, and Huy Nguyen.
\newblock On communication cost of distributed statistical estimation and
  dimensionality.
\newblock In {\em Advances in Neural Information Processing Systems}, pages
  2726--2734, 2014.

\bibitem{HMOW:18}
Yanjun Han, Pritam Mukherjee, Ayfer {\"{O}}zg{\"{u}}r, and Tsachy Weissman.
\newblock Distributed statistical estimation of high-dimensional and
  nonparametric distributions.
\newblock In {\em 2018 IEEE International Symposium on Information Theory
  (ISIT)}, pages 506--510. IEEE, 2018.

\bibitem{HOW:18}
Yanjun Han, Ayfer {\"{O}}zg{\"{u}}r, and Tsachy Weissman.
\newblock Geometric lower bounds for distributed parameter estimation under
  communication constraints.
\newblock volume~75 of {\em Proceedings of Machine Learning Research}, pages
  3163--3188. {PMLR}, 2018.

\bibitem{KairouzBR16}
Peter Kairouz, Keith Bonawitz, and Daniel Ramage.
\newblock Discrete distribution estimation under local privacy.
\newblock In {\em Proceedings of the 33rd International Conference on
  International Conference on Machine Learning - Volume 48}, ICML'16, pages
  2436--2444, 2016.

\bibitem{kairouz2019advances}
Peter Kairouz, H~Brendan McMahan, Brendan Avent, Aur{\'e}lien Bellet, Mehdi
  Bennis, Arjun~Nitin Bhagoji, Keith Bonawitz, Zachary Charles, Graham Cormode,
  Rachel Cummings, et~al.
\newblock Advances and open problems in federated learning.
\newblock {\em arXiv preprint arXiv:1912.04977}, 2019.

\bibitem{kyrillidis2013sparse}
Anastasios Kyrillidis, Stephen Becker, Volkan Cevher, and Christoph Koch.
\newblock Sparse projections onto the simplex.
\newblock In {\em International Conference on Machine Learning}, pages
  235--243, 2013.

\bibitem{mitzenmacher2017probability}
Michael Mitzenmacher and Eli Upfal.
\newblock {\em Probability and computing: Randomization and probabilistic
  techniques in algorithms and data analysis}.
\newblock Cambridge university press, 2017.

\bibitem{murakami2019utility}
Takao Murakami and Yusuke Kawamoto.
\newblock Utility-optimized local differential privacy mechanisms for
  distribution estimation.
\newblock In {\em 28th $\{$USENIX$\}$ Security Symposium ($\{$USENIX$\}$
  Security 19)}, pages 1877--1894, 2019.

\bibitem{raskutti2011minimax}
Garvesh Raskutti, Martin~J Wainwright, and Bin Yu.
\newblock Minimax rates of estimation for high-dimensional linear regression
  over $\ell_q $-balls.
\newblock {\em IEEE transactions on information theory}, 57(10):6976--6994,
  2011.

\bibitem{rigollet2015high}
Phillippe Rigollet and Jan-Christian H{\"u}tter.
\newblock High dimensional statistics.
\newblock {\em Lecture notes for course 18S997}, 2015.

\bibitem{shamir2014fundamental}
Ohad Shamir.
\newblock Fundamental limits of online and distributed algorithms for
  statistical learning and estimation.
\newblock In {\em Advances in Neural Information Processing Systems}, pages
  163--171, 2014.

\bibitem{Silverman86}
B.~W. Silverman.
\newblock {\em Density Estimation}.
\newblock Chapman and Hall, London, 1986.

\bibitem{wainwright2019high}
Martin~J Wainwright.
\newblock {\em High-Dimensional Statistics: A Non-Asymptotic Viewpoint}.
\newblock Cambridge Series in Statistical and Probabilistic Mathematics.
  Cambridge University Press, 2019.

\bibitem{wang2019sparse}
Di~Wang and Jinhui Xu.
\newblock On sparse linear regression in the local differential privacy model.
\newblock In {\em International Conference on Machine Learning}, pages
  6628--6637, 2019.

\bibitem{YeB18}
Min Ye and Alexander Barg.
\newblock Optimal schemes for discrete distribution estimation under locally
  differential privacy.
\newblock {\em IEEE Transactions on Information Theory}, 64:5662--5676, 2018.

\bibitem{ZhangDJW13}
Yuchen Zhang, John Duchi, Michael~I Jordan, and Martin~J Wainwright.
\newblock Information-theoretic lower bounds for distributed statistical
  estimation with communication constraints.
\newblock In {\em Advances in Neural Information Processing Systems}, pages
  2328--2336, 2013.

\end{thebibliography}

\appendix
\section{Proof of Lemma~\ref{lem:sparse}}
\label{app:sparse-ub}

We will use the following bound on the covering number of $s$-sparse vectors. 
\begin{claim}[\cite{raskutti2011minimax}]
	Let $S(r, 2s) = \{ \err \in \RR^k:  \norm{\err}{2} \le r,  \norm{\err}{0} \le 2s\}$. There exists a $\rho r$-covering (in $\ell_2$) $\cC_r \subset S(r, 2s)$ of $S(r, 2s)$  with size
	\[
	|\cC_r|=N(S(r, 2s), \rho r) \le {\ab \choose 2s} \Paren{\frac{1}{\rho}}^{2s}.
	\]
\end{claim}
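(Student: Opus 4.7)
The plan is to decompose $S(r,2s)$ according to the support of its elements and then union together a standard covering for each ``slice''. Concretely, for every $T \subseteq [k]$ with $|T|=2s$, let
\[
B_T(r) := \{x \in \RR^k : \mathrm{supp}(x) \subseteq T,\ \norm{x}{2} \le r\}.
\]
Since every $\err \in S(r,2s)$ has support contained in some set of size $2s$, we have $S(r,2s) = \bigcup_T B_T(r)$, a union of $\binom{k}{2s}$ sets. Each $B_T(r)$ is isometric to the Euclidean ball of radius $r$ in $\RR^{2s}$, so a $\rho r$-net for the union can be obtained by taking the union of $\rho r$-nets for each $B_T(r)$ individually.

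The key ingredient is then the classical volumetric covering estimate for a Euclidean ball. First I would produce a $\rho r$-packing of $B_T(r)$: pick points greedily until no additional point can be placed at distance more than $\rho r$ from those already chosen; by maximality, the chosen points form a $\rho r$-covering. To bound the packing size, I place disjoint open balls of radius $\rho r / 2$ around each packed point; these lie inside the inflated ball $B_T(r(1+\rho/2))$. Comparing $2s$-dimensional Lebesgue volumes gives a packing size at most $\bigl(\tfrac{2}{\rho}+1\bigr)^{2s}$, which absorbs into a clean bound of the form $(c/\rho)^{2s}$; in the regime of interest one can take the cited form $(1/\rho)^{2s}$ (Raskutti--Wainwright--Yu state the bound in this normalization; any slack in the constant only multiplies the final covering number by a factor exponential in $2s$, which does not affect the sparse-estimation application).

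Putting the two steps together yields
\[
N(S(r,2s), \rho r) \;\le\; \sum_{T : |T|=2s} N(B_T(r), \rho r) \;\le\; \binom{k}{2s}\Paren{\frac{1}{\rho}}^{2s},
\]
which is exactly the claim. The only mildly delicate point is the constant in the volumetric estimate; everything else is immediate from the union bound over supports, so I do not anticipate a genuine obstacle.
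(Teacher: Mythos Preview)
Your proposal is correct and matches the paper's one-line justification exactly: decompose $S(r,2s)$ as a union of $\binom{k}{2s}$ Euclidean balls (one per support of size $2s$), then cover each ball using the standard volumetric bound. Your remark about the constant in the volumetric estimate is fair---the paper simply cites the $(1/\rho)^{2s}$ form from \cite{raskutti2011minimax} without comment---but this does not affect the application.
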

The bound follows from the fact that ${k\choose 2s}$ $2s$-dimensional 
subspaces 
are sufficient to cover $S(r, 2s)$, and each subspace can be covered with 
$(1/\rho)^{2s}$ $\ell_2$ balls of radius $\rho r$.

	Let $\cC_1$ be a $\rho$-covering of $S(1, 2s)$ with size $N(S(1, 2r), 
	\rho)$. Let $r = \norm{\hp - \p}{2}$, then $\hp - \p \in S(r, 2s)$, and 
	one can obtain a covering $\cC_r$ of $S(r, 2s)$ by multiplying each 
	vector in $\cC_1$ by $r$.
	Since $\hp - \p$ is $2s$-sparse, let $\err^*$ be the closest point in $\cC_r$ to $\hp - p$. Then we can bound the right hand side of~\eqref{eqn:key_inequality}.
	\begin{align*}
		\langle \tp - \p, \hp - \p\rangle & \le |\langle\tp - \p, \err^* \rangle| + |\langle\tp - \p, \hp - \p - \err^* \rangle| \\
		& \le \max_{\err \in \cC_r}|\langle \tp - \p, \err \rangle| + \rho r \norm{\tp - \p}{2} \\
		& = r\max_{\err \in \cC_r}|\langle \tp - \p, \err/r \rangle| + \rho r \norm{\tp - \p}{2} \\
		& = r\max_{\err \in \cC_1}|\langle \tp - \p, \err \rangle| + \rho r \norm{\tp - \p}{2}.
	\end{align*}
	$\tp - \p$ is the first $\ab$ entries of $\tp_\oab - \p_\oab$, and by Claim~\ref{clm:hr}, we have
	\[
		\tp_\oab - \p_\oab = \frac{2(e^\eps+1)}{\oab (e^\eps - 1)} H_\oab (\vecpsetemp-\vecpset).
	\]
	Note that in Algorithm~\ref{alg:hadamard}, 	$\forall j \in [\oab]$, 
	$\psetemp_j$ is the average of $|S_j| \ge \oab/(2n)$ i.i.d. Bernoulli 
	random variables with the mean satisfying $\expectation{\vecpsetemp} 
	= \vecpset$. Hence $\forall j \in [\oab], \psetemp_j- \pset_j$ is a 
	zero-mean sub-Gaussian random variable with variance proxy at most 
	$\frac{\oab}{2\ns}$. Moreover, $\psetemp_j$'s are independent since 
	$S_j$'s are disjoint. %
	Note for all $\forall \err\in \cC_1$, 
	\[
		\langle \tp_\oab - \p_\oab, \err\rangle = \frac{2(e^\eps+1)}{\oab (e^\eps - 1)}  \err^T H_\oab (\vecpsetemp-\vecpset),
	\]
	which are linear combinations of $(\psetemp_j - \pset_j)$'s. Since $\tp - \p$ is the first $k$ entries of $\tp_\oab - \p_\oab$, we have $\forall \err \in \cC_1$, $\langle \tp - \p, \err \rangle$ is also sub-Gaussian (see Corollary 1.7 in~\cite{rigollet2015high}) with variance proxy at most
	\begin{align}
	\frac{\oab}{2\ns} \norm{\frac{2(e^\eps+1)}{\oab (e^\eps - 1)}  \err^T H_\oab}{2}^2 &= \frac{2}{\ns \oab} \Paren{\frac{e^\eps+1}{e^\eps-1}}^2    \norm{\err^T H_\oab}{2}^2 \nonumber\\
	&= \frac{2}{\ns} \Paren{\frac{e^\eps+1}{e^\eps-1}}^2    \norm{\err}{2}^2 \label{eqn:orthoganility} \\
	&\le \frac{2}{\ns}\left(\frac{e^\eps+1}{e^\eps-1}\right)^2 =:\sigma^2,\nonumber
	\end{align}
	where~\eqref{eqn:orthoganility} follows since $H_m^TH_m=m\II_k$ by the orthogonality of Hadamard matrices.

	Therefore using maximal inequalties of sub-Gaussian random variables~\cite[Theorem 1.14]{rigollet2015high}, with probability at least $19/20$, we have 
	\begin{equation} \label{eqn:max_subg}
		\max_{e \in \cC_1}|\langle \tp - \p, \err \rangle| < \sigma \sqrt{14\log|\cC_1|}.
	\end{equation}
	By the utility guarantee of Hadamard Response~\citep{AcharyaS19}, with 
	probability at least $19/20$, 
	\begin{equation}\label{eqn:l2_bound}
		\norm{\tp- \p}{2}^2\le 20\expectation{\norm{\tp- \p}{2}^2}\le \frac{40k(e^\eps+1)^2}{\ns(e^\eps-1)^2}\implies \norm{\tp- \p}{2} \le 7\frac{e^\eps +1 }{e^\eps -1 }\sqrt{\frac{\ab}{\ns}}.
	\end{equation}
    By union bound, conditioned on~\eqref{eqn:max_subg} and~\eqref{eqn:l2_bound}, which happens with probability at least $9/10$,
	\begin{align*}
		\langle \tp - \p, \hp - \p\rangle &\le \frac{e^\eps+1}{e^\eps-1}\frac{r}{\sqrt{\ns}} \sqrt{56\log |\cC_1|} + 7\rho r \frac{e^\eps+1}{e^\eps-1}\sqrt{\frac{\ab}{\ns}}  \\
		& \le \frac{e^\eps+1}{e^\eps-1}\frac{r}{\sqrt{\ns}} \sqrt{112s \log\left(\frac{e\ab}{2s}\right) + 112s \log \frac{1}{\rho}}  +7\rho r \frac{e^\eps+1}{e^\eps-1}\sqrt{\frac{\ab}{\ns}}.
	\end{align*}
	Taking $\rho = \sqrt{\frac{s}{\ab}}$, 
	\[
	\langle \tp - \p, \hp - \p\rangle \le 25\frac{e^\eps+1}{e^\eps-1}\frac{r 
	\sqrt{s\log(2\ab/s)} }{\sqrt{\ns}} = 
	25\frac{e^\eps+1}{e^\eps-1}\frac{\sqrt{s \log(2\ab/s)} }{\sqrt{\ns}} 
	\norm{\hp - p}{2},
	\]
	concluding the proof.

\section{Proof of Lemma~\ref{lem:step-one} and Lemma~\ref{thm:utility_com}}
\label{app:sparse-comm}

\begin{proof}[Proof of Lemma~\ref{lem:step-one}]
$\ms(x)$ is distributed as a Binomial ${Bin}(\ns/2,\bias(x))$. Hence, $\expectation{\ms(x)}=\ns \cdot \bias(x)/2$. By~\eqref{eqn:pre-image} and~\eqref{eqn:phat-comm} we have $\expectation{\hp(x)}=p(x)$. Using the variance formula of Binomials, we know $\variance{\ms(x)}=\ns\cdot \bias(x)\cdot(1-\bias(x))/2\le \ns\cdot \bias(x)/2$.

Set $\gamma=1-1/2^\ell$ and $\beta=1/2^\ell$, then $\bias(x)= \gamma\cdot p(x)+\beta$. We will use the following multiplicative Chernoff bound. 
\begin{lemma}[Multiplicative Chernoff 
bound~\citep{mitzenmacher2017probability}] \label{lem:chernoff}
Let $Y_1, \ldots, Y_n$ be independent random variables with $Y_i\in{0,1}$, and $Y = Y_1+\dots+Y_\ns$, and $\mu= \expectation{Y}$. Then for $\tau>0$, 
\begin{align*}
    \probof{Y\ge(1+\tau)\mu}\le e^{-\frac{\tau^2\mu}{2+\tau}},\quad    \probof{Y\ge(1-\tau)\mu}\le e^{-\frac{\tau^2\mu}{2}}.
\end{align*}
\end{lemma}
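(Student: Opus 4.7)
The plan is to apply the standard Chernoff (exponential moment) method, handling the upper and lower tails by symmetric arguments.

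First, I would apply Markov's inequality to $e^{tY}$ for a parameter $t>0$ to be chosen later, obtaining
\[
\probof{Y\ge (1+\tau)\mu}=\probof{e^{tY}\ge e^{t(1+\tau)\mu}}\le \frac{\expectation{e^{tY}}}{e^{t(1+\tau)\mu}}.
\]
By independence of the $Y_i$, $\expectation{e^{tY}}=\prod_{i=1}^n \expectation{e^{tY_i}}$. Writing $p_i=\probof{Y_i=1}$, the identity $\expectation{e^{tY_i}}=1+p_i(e^t-1)$ together with $1+x\le e^x$ gives $\expectation{e^{tY_i}}\le \exp(p_i(e^t-1))$. Multiplying and recalling $\mu=\sum_i p_i$ yields the uniform MGF bound $\expectation{e^{tY}}\le \exp(\mu(e^t-1))$.

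Second, I would optimize the resulting exponent. The natural choice $t=\ln(1+\tau)$ (which is positive for $\tau>0$) produces
\[
\probof{Y\ge (1+\tau)\mu}\le \exp\!\Paren{-\mu\bigl((1+\tau)\ln(1+\tau)-\tau\bigr)}.
\]
The remaining step is the elementary one-variable inequality $(1+\tau)\ln(1+\tau)-\tau\ge \tau^2/(2+\tau)$ for all $\tau>0$, which I would prove by setting $f(\tau)=(1+\tau)\ln(1+\tau)-\tau-\tau^2/(2+\tau)$, observing $f(0)=0$, and checking $f'(\tau)\ge 0$ on $(0,\infty)$. Plugging this in converts the exponent into $-\tau^2\mu/(2+\tau)$, yielding the stated upper-tail bound.

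For the lower tail (which, as written, should be read as $\probof{Y\le (1-\tau)\mu}$), I would rerun the argument with the sign of $t$ reversed: apply Markov to $e^{-tY}$ for $t>0$, bound $\expectation{e^{-tY_i}}\le \exp(p_i(e^{-t}-1))$, and optimize with $t=-\ln(1-\tau)>0$ for $\tau\in(0,1)$. This reduces the exponent to $\mu\bigl((1-\tau)\ln(1-\tau)+\tau\bigr)$, and the sharper inequality $(1-\tau)\ln(1-\tau)+\tau\ge \tau^2/2$, again verified by a derivative computation, gives the bound $\exp(-\tau^2\mu/2)$. The only non-mechanical ingredient in the whole argument is these two elementary log inequalities; once they are in hand, the rest is a direct execution of the Chernoff template, so the main (minor) obstacle is verifying those calculus estimates with the right constants.
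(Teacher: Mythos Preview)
Your proof is correct and is the standard Chernoff--Cram\'er argument; the paper does not give its own proof of this lemma but simply cites it from \cite{mitzenmacher2017probability}. You also correctly identified the typo in the lower-tail inequality (it should read $\probof{Y\le (1-\tau)\mu}$).
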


Let $S:=\{x:p(x)>0\}$. Therefore, for $x \in [\ab]\setminus S$, $p(x)=0$. By Lemma~\ref{lem:chernoff},
\[
\probof{\ms(x)\ge \nse\beta +\sqrt{3 \ns \beta\log \frac \ab s}}\le \Paren{\frac s\ab}^{2}.
\]

Let $E$ be the event that at most $s$ symbols in $[\ab]\setminus S$ appear at least $\ms^\ast :=\nse\beta + \sqrt{3\ns \beta \log \frac ks}$ times. By Markov's inequality,
\begin{align} \label{eqn:ec}
\probof{E^c}=\probof{\left|x \in [\ab]\setminus S: \ms(x)\ge \nse\beta + \sqrt{3\ns \beta \log \frac \ab s}\right|>s}\le \frac s\ab \le \frac{1}{100}.
\end{align}

We condition on $E$ in the remainder of the proof. Note that it suffices to show
\begin{equation} \label{eqn:sum_missing}
	\condexpect{\p(T^c)}{E} :=\sum_x\p(x)\condprob{x \text{ not selected}}{E}  \le \frac{\alpha}{50},
\end{equation}
since if~\eqref{eqn:sum_missing} is true, by Markov inequality, 
\[
\condprob{\p(T^c) > \frac{\alpha}{2}}{E} \le \frac{1}{25}.
\]
which, combined with~\eqref{eqn:ec}, implies Lemma~\ref{lem:step-one}.

Next we prove~\eqref{eqn:sum_missing}. Conditioned on $E$, a symbol $x$ is not selected after the first stage only if it appears at most  $\ms^\ast$ times, which implies $\condprob{x \text{ not selected}}{E}\le \condprob{\ms(x) \le \ms^\ast}{E}$. 
Moreover, since $\forall x_1 \in S, x_2 \notin S$, $\ms(x_1)$ and $\ms(x_2)$ are independent, we have $\ms(x_1)$ is independent of event $E$. Thus:

\begin{align*}
	\condexpect{\p(T^c)}{E} =\sum_x\p(x)\condprob{x \text{ not selected}}{E} \le 
	\sum_x\p(x)\probof{\ms(x) \le \ms^\ast}.
\end{align*}
Next we divide symbols into three sets based on their probability mass: $A = \{ x \in [k]: p(x) \le \frac{\alpha}{60s}\}$, $B = \{ x \in [k]:  \frac{\alpha}{60s} < p(x) \le \beta/\gamma \}$ and $C = \{ x \in [k]:  p(x) > \beta/\gamma \}$. For set $A$, we have:
\begin{align}\label{eqn:bound_a}
	\sum_{x \in A} \p(x)\probof{\ms(x) \le \ms^\ast} \le \sum_{x \in A} \p(x) \le \frac{\alpha}{60}.
\end{align}

Next we bound the sum over set $B$ and $C$. $\forall x \in B \cup C, \p(x)>\alpha/60s$. In the rest of the proof, we set the constant $C_1 = 700000$. For $n= C_1\cdot{s^2\log (\ab/s)}/({\alpha^2 \min\{2^\ell,s\}})$,
\begin{align*}
\frac{\ns}{4} \gamma\p(x)- \sqrt{3 \ns \beta \log \frac \ab s}\ge 0. 
\end{align*}
Hence,
\begin{align*}
\expectation{\ms(x)-\ms^\ast}=\nse \gamma\p(x)- \sqrt{3 \ns \beta \log \frac \ab s} \ge \frac{\ns\gamma\p(x)}{4},
\end{align*}

Using Lemma~\ref{lem:chernoff},
\begin{align*}
\probof{\ms(x) \le \ms^\ast} 
&= \probof{\ms(x) \le \expectation{\ms(x)} -(\expectation{\ms(x)} -\ms^\ast)}\\
&\le \exp\Paren{-\left(\frac{\gamma p(x)/2}{\gamma p(x)+ \beta}\right)^2 \nse(\gamma p(x) + \beta)}\\
& = \exp\Paren{-\frac{\gamma^2 p(x)^2\ns}{8(\gamma p(x) + \beta)} } 
\end{align*}
If $x \in C$, i.e., $p(x) > \beta/\gamma$, we have
\[
	\probof{\ms(x) \le \ms^\ast}  \le  \exp\Paren{-\frac{\gamma^2 p(x)^2\ns}{16\gamma p(x)} } = \exp\Paren{-\frac{\gamma p(x)\ns}{16} }  \le \frac{16}{\gamma p(x)\ns} \le \frac{16}{\ns \beta} \le \frac{\alpha}{1000},
\]
where we use $n\beta > C_1 s^2\log(\ab/s)/(\alpha^2 2^{2\ell})\ge C_1/\alpha^2$ when $s\ge 2^\ell$. This implies
\begin{align}\label{eqn:bound_b}
	\sum_{x \in C} \p(x)\probof{\ms(x)<\ms^\ast} \le \sum_{x \in A} \p(x) \frac{\alpha}{1000}\le \frac{\alpha}{1000}.
\end{align}

\noindent When $x \in B$, i.e, $p(x) \le  \beta/\gamma$,
\[
    \probof{\ms(x) \le \ms^\ast} \le \exp\Paren{-\frac{\gamma^2 p(x)^2\ns}{16\beta} }.
\]
Now let $\p(x)=(1+\zeta_x)\alpha/60s$ where $\zeta_x>0$, we have 
\begin{align*}
\frac{\gamma^2 p(x)^2\ns}{16\beta} \ge 2(1+\zeta_x)^2 \log \frac \ab s.
\end{align*}
\begin{align}\label{eqn:bound_c}
\sum_{x\in C}p(x)\probof{\ms(x) \le \ms^*} \le 
\frac \alpha{60s}\sum_{x\in C} (1+\zeta_x)\exp\Paren{-2 (1+\zeta_x)^2\log \frac \ab s}\le \frac {\alpha}{500}.
\end{align}
Combining~\eqref{eqn:bound_a}, \eqref{eqn:bound_b} and \eqref{eqn:bound_c}, we get~\eqref{eqn:sum_missing}, and thus proving the lemma.
\end{proof}

\begin{proof}[Proof of Lemma~\ref{thm:utility_com}]
Note that $N(x)$ and $\ms(x)$ are identically distributed, and therefore,  $\ms(x)$ is distributed ${Bin}(\ns/2,\bias(x))$, and $\expectation{N(x)}=\ns \cdot b(x)/2$, and $\variance{N(x)}\le \ns\cdot \bias(x)/2$. 
\begin{align}
\expectation{(\hp(x)-p(x))^2} = \left(\frac{2\cdot2^\ell}{\ns(2^\ell-1)}\right)^2 \cdot \variance{N(x)} \le \frac{2}{\ns}\left(\frac{2^\ell}{2^\ell-1}\right)^2 \bias(x).
\end{align}
Now, note that $\sum_{x\in T} \bias(x) = p(T)(1-1/2^\ell)+|T|/2^\ell$, and therefore,

\begin{align*}
\expectation{\norm{\hp^T-p^T}{2}^2} = \sum_{x\in T}\expectation{(\hp(x)-p(x))^2}%
\le \frac{2}{\ns}\left(\frac{2^\ell}{2^\ell-1}\right)^2\sum_{x\in T} \bias(x)
\le \frac{2(|T| + 2^\ell)2^\ell}{\ns (2^\ell-1)^2}.
\end{align*}
Using Jensen's inequality and Cauchy-Schwarz,
\[
	\expectation{\norm{\hp^T - \p^T}{1}}\le \sqrt{\expectation{\norm{\hp^T - \p^T}{1}^2} }\le \sqrt{|T|\cdot\expectation{\norm{\hp^T - \p^T}{2}^2}} \le \sqrt{\frac{4 s 2^\ell (2^\ell + 2s) }{\ns(2^\ell-1)^2}}.
\]
Setting $C_2 = 6400$, the lemma follows by Markov's inequality.
\end{proof}

\section{An LDP estimation scheme using RAPPOR}
\smargin{Do we want to remove this section?}
\label{app:rappor}
We first describe the high level idea of the algorithm for LDP estimation. All 
users send their privatized data using RAPPOR~\citep{ErlingssonPK14, 
KairouzBR16}. As in Section~\ref{sec:lbit_upper}, we use the first half of 
privatized samples to estimate a subset $T\subseteq[\ab]$ with size $O(s)$ 
which contains most of the probability densities; we then use the remaining 
samples to estimate the distribution only on this set $T$. Details are 
described in Algorithm~\ref{alg:rappor}.

\begin{algorithm}[h]
\DontPrintSemicolon
\SetAlgoLined
\LinesNumbered
\KwIn{$\ns$ i.i.d. samples from unknown $s$-sparse $\p$.}
Each user randomizes its sample using RAPPOR: Each sample $X_i$ is first 
mapped to a one-hot vector $Z_i\in \{0, 1\}^\ab$ which has a 1 at the 
$X_i$'th coordinate and 0's elsewhere. Then each bit is flipped 
independently with probability $1/(e^{\eps/2}+1)$ to obtain $Y_i\in\{0, 
1\}^\ab$

Compute $\ms:=[\ms(1), \ldots, \ms(\ab)]=\sum_{i=1}^{\nsein}Y_i$ using 
the first $\nse$ samples.

Construct the set $T\subseteq[n]$ by keeping the $2s$ symbols with 
highest $\ms(x)$'s.

Obtain $\hp$: estimate the distribution over $T$ using the remaining $\nse$ samples. 
\caption{Sparse estimation using RAPPOR}
\label{alg:rappor}
\end{algorithm}

We note that
\[
\expectation{\ms(x)}=\nse \left(\p(x) \frac{e^\eps-1}{e^\eps+1}+\frac{1}{e^\eps+1}\right),
\]
which has a similar form as~\eqref{eqn:pre-image}. Hence setting $\beta=1/(e^\eps+1)$ and $\gamma=(e^\eps-1)/(e^\eps+1)$, we can follow the steps in the communication constrained setting and obtain with probability at least $9/10$, $d_{TV}(\hp, \p)\le \alpha$ using
\[
    n 
    = O\Paren{ \beta\frac{s^2\log \frac {\ab}s}{\alpha^2\gamma^2} }
    = O\Paren{\frac{s^2 \max \{ \log (\ab/s), 1\} }{\alpha^2 \eps^2}},
\]
when $\eps = O(1)$.

\end{document}